\newcommand{\pagenumbaa}{1}
\theoremstyle{plain}
\newtheorem{theorem}{Theorem}
\newtheorem{lemma}[theorem]{Lemma}
\newtheorem{corollary}[theorem]{Corollary}
\newtheorem*{rep@theorem}{\rep@title}
\newcommand{\newreptheorem}[2]{%
\newenvironment{rep#1}[1]{%
 \def\rep@title{#2 \ref{##1} (restatement)}%
 \begin{rep@theorem}}%
 {\end{rep@theorem}}}
\theoremstyle{definition}
\newcommand{\beq}{\begin{equation}}
\newcommand{\eeq}{\end{equation}}
\newcommand{\beqa}{\begin{eqnarray}}
\newcommand{\eeqa}{\end{eqnarray}}
\newcommand{\bal}{\begin{aligned}}
\newcommand{\eal}{\end{aligned}}
\newcommand{\bsp}{\begin{equation}\begin{split}}
\newcommand{\esp}{\end{split}\end{equation}}
\newcommand{\bit}{\begin{itemize}}
\newcommand{\eit}{\end{itemize}}
\newcommand{\ben}{\begin{enumerate}}
\newcommand{\een}{\end{enumerate}}
\newcommand{\nn}{\nonumber}
\newcommand*{\ket}[1]{| #1 \rangle}
\newcommand*{\bra}[1]{\langle #1 |}
\newcommand{\ketbra}[1]{| #1 \rangle \langle #1 |}
\newcommand{\id}{I }
\newcommand{\tr}{\mathrm{tr}}
\newcommand{\rank}{\mathrm{rank}}
\newcommand{\braket}[2]{\langle #1 | #2 \rangle}
\newcommand{\data}{\mathcal{D}}
\newcommand{\comp}{\mathcal{C}}
\newcommand{\prob}[1]{\mathbb{P}\bigl[ #1 \bigr]}
\newcommand{\nc}{\newcommand}
\nc{\rnc}{\renewcommand}
\newcommand{\proj}[1]{\left|#1\right\rangle\left\langle #1\right|}
\def\be#1\ee{\begin{equation}#1\end{equation}}
\def\bea#1\eea{\begin{eqnarray}#1\end{eqnarray}}
\def\beas#1\eeas{\begin{eqnarray*}#1\end{eqnarray*}}
\def\ba#1\ea{\begin{align}#1\end{align}}
\def\bas#1\eas{\begin{align*}#1\end{align*}}
\def\bpm#1\epm{\begin{pmatrix}#1\end{pmatrix}}
\def\nn{\nonumber}
\def\eq#1{(\ref{eq:#1})}
\def\L{\left} 
\def\R{\right}
\def\eps{\varepsilon}
\def\cC{\mathcal{C}}
\def\cD{\mathcal{D}}
\def\cE{\mathcal{E}}
\def\cJ{\mathcal{J}}
\def\cO{{\cal O}}
\def\bbC{\mathbb{C}}
\DeclareMathOperator*{\E}{\mathbb{E}}
\DeclareMathOperator*{\bbE}{\mathbb{E}}
\def\benum{\begin{enumerate}}
\def\eenum{\end{enumerate}}
\def\bit{\begin{itemize}}
\def\eit{\end{itemize}}
\newcommand{\secref}[1]{Section~\ref{sec:#1}}
\newcommand{\lemref}[1]{Lemma~\ref{lem:#1}}
\newcommand{\thmref}[1]{Theorem~\ref{thm:#1}}
\newcommand{\corref}[1]{Corollary~\ref{cor:#1}}
\nc{\todo}[1]{\textcolor{red}{todo: #1}}
\def\begsub#1#2\endsub{\begin{subequations}\label{eq:#1}\begin{align}#2\end{align}\end{subequations}}
\nc\qand{\qquad\text{and}\qquad}
\nc\mnb[1]{\medskip\noindent{\bf #1}}
\begin{document}

\hspace{4.6cm} \vbox{
\hbox{MIT--CTP 4619}
}

\vspace*{2ex}

\title{\boldmath Extracting short distance information from $b\to s\,\ell^+\ell^-$ effectively}

%%%%%%%%%%%%%%%%%%%%%%%%%%%%%%%%%%%%%%%%%%%%%%%%%%%%%%%%%%%%%%%%%%%%
% Title
%%%%%%%%%%%%%%%%%%%%%%%%%%%%%%%%%%%%%%%%%%%%%%%%%%%%%%%%%%%%%%%%%%%%

\title{Compressibility of positive semidefinite factorizations and quantum models}

%%%%%%%%%%%%%%%%%%%%%%%%%%%%%%%%%%%%%%%%%%%%%%%%%%%%%%%%%%%%%%%%%%%%
% Names of authors
%%%%%%%%%%%%%%%%%%%%%%%%%%%%%%%%%%%%%%%%%%%%%%%%%%%%%%%%%%%%%%%%%%%%

\author{Cyril J. Stark}
\author{Aram W. Harrow}

\affiliation
{Center for Theoretical Physics, Massachusetts Institute of Technology, 77 Massachusetts Avenue, Cambridge MA 02139-4307, USA}

\date{\today}

%%%%%%%%%%%%%%%%%%%%%%%%%%%%%%%%%%%%%%%%%%%%%%%%%%%%%%%%%%%%%%%%%%%%
% Abstract 
%%%%%%%%%%%%%%%%%%%%%%%%%%%%%%%%%%%%%%%%%%%%%%%%%%%%%%%%%%%%%%%%%%%%

\begin{abstract}

We investigate compressibility of the dimension of positive semidefinite matrices while approximately preserving their pairwise inner products. This can either be regarded as compression of positive semidefinite factorizations of nonnegative matrices or (if the matrices are subject to additional normalization constraints) as compression of quantum models. We derive both lower and upper bounds on compressibility. Applications are broad and range from the statistical analysis of experimental data to bounding the one-way quantum communication complexity of Boolean functions.
\preprint{MIT-CTP/4619}
\end{abstract}

%\pacs{03.65.Wj,03.65.Ta,03.67.-a}

\maketitle

\setcounter{page}{\pagenumbaa}
\thispagestyle{plain}

%%%%%
\section{Introduction}

The following situation is ubiquitous in quantum information: a $d$-dimensional state $\rho_x$ is prepared and a measurement $E_y$ is performed with POVM
elements $E_{y,z}$, with $x \in [X], y\in [Y], z\in [Z]$.  The
resulting conditional probability distribution of these outcomes
$p(z|x,y) = \tr (\rho_x E_{y,z})$ can be expressed as a matrix $\cD$
with $\cD_{x; y,z} = p(z|x,y)$.  The forward problem of computing
$\cD$ given $\{\rho_x\}, \{E_y\}$ is straightforward, but often we
need to solve the {\em inverse} problem of finding states and
measurements compatible with a given $\cD$.    This problem of finding
{\em quantum models} is in general
underdetermined, but we can constrain the problem by minimizing the dimension
$d$ of the model.  Similarly, if given an accuracy parameter $\eps$, we could ask
for the minimum $d$ for which a $d$-dimensional quantum model can approximate each entry of $\cD$ to
additive error $\eps$.   

Two important examples where this problem occurs are:
\bit
\item {\em Inferring quantum models.}  Suppose we perform an
  experiment where $x$ and $y$ are classical inputs (e.g. $x$ selects
  the state and $y$ the measurement) and $z$ is
  classical observable.  We will sample from the distributions
  $p(z|x,y)$ and thereby learn an approximation of $\cD$.  However, we
  do not have direct access to the underlying quantum states or measurements.  In
  reconstructing these states and measurements, it is natural (e.g., to prevent overfitting) to posit the simplest
  model consistent with the data, which in the absence of other
  information would mean the quantum model of lowest dimension.
\item {\em One-way quantum communication complexity.}  Suppose that Alice
  and Bob would like to jointly compute a function  $f :[X] \times
  [Y]\mapsto [Z]$ when Alice is given input $x$ and Bob is given
  input $y$.  The most general protocol consists of Alice sending
  $\rho_x$ to Bob who performs POVM $E_y$ and outputs the outcome $z$
  that he obtains.  The log of the minimum dimension achieving $p(z|x,y) =
  \delta_{z, f(x,y)}$ (resp. $p(z|x,y)  \approx   \delta_{z, f(x,y)}$)
 is the exact (resp. approximate) one-way quantum
  communication complexity of $f$.
\eit

Quantum models for conditional probability distributions are an
example of the more general idea of a psd (positive semidefinite)
factorization.  If $M$ is a matrix with nonnegative entries, then a
psd factorization of $M$ is a collection of $d$-dimensional psd
matrices $\{A_x\}, \{B_y\}$ such that $M_{xy} = \tr (A_xB_y)$.  The
psd-rank of $M$ is the minimum $d$ for which this is possible;
approximate versions can also be defined~\cite{fiorini2012linear,gouveia2013approx}.  See
\cite{fawzi2014positive} for a recent review of psd-rank.  One
application of psd-rank is to the case when $M$ is the slack matrix of
a polytope, in which case the psd-rank is the smallest possible
dimension of a semidefinite program representing the polytope; here
too approximate versions of this relation are
known~\cite{gouveia2013approx}.  Our problem of finding
low-dimensional quantum models is a special case of the general
psd-rank problem which differs in our requirements for normalization:
$\rho_x$ should not only be psd but also trace 1, and $E_{yz}$ should
not only be psd but should satisfy $\sum_z E_{yz} = I$ for all $y$.

In this paper we study the question of when a quantum model can be
compressed to a smaller dimension.  We give conditions under which
compression is, or is not, possible.   \\

\emph{Main results.} We present one theorem demonstrating {\em
  incompressibility} (Theorem~\ref{Thm:incompressibility}) and three theorems describing {\em compressibility} (Theorems~\ref{thm:comp.of.psd.factorizations},~\ref{thm:comp.of.q.models} and~\ref{thm:comp.of.q.models.with.spectral.tails}).  A
common theme will be that measurement operators with high trace are a barrier to
compressibility.
% Consider a $D$-dimensional quantum model for $\data$. Due to normalization of measurements, $\sum_z E_{yz} = I$. On a $D$-dimensional Hilbert space the identity matrix has trace-norm $D$. Therefore, if $\data$ implies lower bounds on $\| I \|_1$ then these bounds can be used as lower bounds on $D$. The derivation of lower bounds on $\| I \|_1$ is straightforward because the entries of $\data$ are inner products between states and measurements. This gives Theorem~\ref{Thm:incompressibility} which shows that there exist data tables $\data$ whose underlying quantum system cannot be compressed.

\begin{theorem}\label{Thm:incompressibility}
	Let $\data \in \mathbb{R}^{X \times YZ}$, let $\bigl( E_{yz}
        \bigr)_{z=1}^Z$ be a $d$-dimensional POVMs and let $\rho_x$, $x
        \in [X]$, be $d$-dimensional quantum states. Assume that
        $\data_{x,yz} = \tr\bigl( \rho_x E_{yz} \bigr)$ for all $x \in [X]$, $y \in [Y]$
        and $z \in [Z]$. Fix $y$. For all $z \in [Z]$, set $c^*_z = \max\{
        \data_{x;yz} \}_{x=1}^X$.  Then, $d \geq \sum_{z=1}^Z c^*_z$.
        Moreover, if $\data' \in \mathbb{R}^{X \times YZ}$ is a
        set of observations coming from $d'$-dimensional states that satisfies
        $\| \data - \data' \|_{\infty} \leq \eps$, then we have that
       $d' \geq -Z \eps + \sum_{z=1}^Z c^*_z$.
\end{theorem}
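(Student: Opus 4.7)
The plan is to exploit the POVM normalization $\sum_{z} E_{yz} = I_d$ together with the operator inequality $\rho_x \preceq I_d$ (which follows because each $\rho_x$ is psd with trace $1$). Concretely, for the fixed $y$ I first note that for any $x$ and $z$,
\[
\tr(\rho_x E_{yz}) \leq \tr(E_{yz}),
\]
since $E_{yz} \succeq 0$ and $I_d - \rho_x \succeq 0$. Maximizing the left-hand side over $x$ yields $c^*_z \leq \tr(E_{yz})$.

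Summing this bound over $z$ and using POVM normalization then gives the desired inequality, as
\[
\sum_{z=1}^Z c^*_z \;\leq\; \sum_{z=1}^Z \tr(E_{yz}) \;=\; \tr\!\Bigl(\sum_{z=1}^Z E_{yz}\Bigr) \;=\; \tr(I_d) \;=\; d.
\]
This establishes the exact statement $d \geq \sum_z c^*_z$.

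For the robust version, I would reduce to the exact bound applied to $\data'$. For each $z$, pick $x^*_z$ achieving the maximum defining $c^*_z$, so $\data_{x^*_z; yz} = c^*_z$. The $\ell_\infty$-closeness $\|\data - \data'\|_\infty \leq \eps$ then implies
\[
\data'_{x^*_z; yz} \;\geq\; c^*_z - \eps,
\]
so that $(c')^*_z := \max_{x} \data'_{x; yz} \geq c^*_z - \eps$. Applying the exact bound already proved to the quantum model realizing $\data'$ in dimension $d'$ gives $d' \geq \sum_z (c')^*_z \geq \sum_z c^*_z - Z\eps$.

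The only potentially subtle point is the operator inequality $\rho_x \preceq I_d$, which I would justify by noting that the eigenvalues of a density matrix lie in $[0,1]$; everything else is a one-line calculation. There is no real obstacle here since the argument is short, but it crucially uses the full POVM normalization (not just $E_{yz} \preceq I$), which is exactly the extra structure that distinguishes quantum-model compression from arbitrary psd factorizations.
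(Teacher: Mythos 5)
Your proof is correct and follows essentially the same route as the paper: bound $c^*_z \leq \tr(E_{yz})$ for each outcome (the paper does this via H\"older, $\tr(\rho E) \leq \|\rho\|\,\|E\|_1$ with $\|\rho\|\leq 1$, which is equivalent to your $\rho_x \preceq I_d$ argument), then sum over $z$ and use $\sum_z E_{yz} = I_d$ to get $d \geq \sum_z c^*_z$. Your handling of the robust case, transferring the exact bound to $\data'$ via $(c')^*_z \geq c^*_z - \eps$, matches the paper's perturbation estimate as well.
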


Theorem~\ref{Thm:incompressibility} is proven in section~\ref{Sect:incompressibility}. Theorem~\ref{Thm:incompressibility} was obtained independently in~\cite{lee2014some} (Theorem~24).

Note that the lower bound from Theorem~\ref{Thm:incompressibility}
cannot exceed $Z$, i.e., the number of measurement
outcomes. Consequently, Theorem~\ref{Thm:incompressibility} leaves
open the possibility to compress quantum models of dimension $D > Z$
into models of dimension $Z$. This is indeed possible if for each measurement $E_y$ all but one POVM element have trace norm constant in $D$. To approach this conclusion we first consider the compression of psd factorizations of $\data$.

\nc{\ThmCompPsdFact}[1]{
	Let $M_1,\ldots,M_J$ be psd matrices on $\mathbb{C}^D$ let $\varepsilon \in (0,1/2]$ and fix $d \in \mathbb{N}$ such that
\be
		d > \frac{16}{\varepsilon^2} \ln\bigl( 2JD \bigr).
\label{eq:d-for-any-M-#1}\ee
	Then there exist psd matrices $M_j'$ on $\mathbb{C}^d$ such that for all $i,j \in [J]$
	\beq\begin{split}\label{eq:matrix-compression-error-#1}
		& \tr(M_{i}M_{j})  - \tr(M_{i}) \tr(M_{j}) 192 \varepsilon \\
		&\ \ \ \ \ \ \leq	\tr\bigl( M'_{i}   M'_{j}  \bigr)  
		\leq	   \tr(M_{i}M_{j})  + \tr(M_{i}) \tr(M_{j}) 192 \varepsilon .
	\end{split}\eeq
	The matrices $M'_j$ can be computed in randomized polynomial time in $J$ and $D$.
}

\begin{theorem}\label{thm:comp.of.psd.factorizations}
\ThmCompPsdFact{first}
\end{theorem}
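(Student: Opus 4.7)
The natural tool is a Johnson--Lindenstrauss (JL) random projection, applied after spectrally decomposing each $M_j$. Writing $M_j = \sum_{k=1}^{D}\ketbra{\phi_{j,k}}$ with $\ket{\phi_{j,k}}:=\sqrt{\lambda_{j,k}}\ket{e_{j,k}}$ (where $\lambda_{j,k},\ket{e_{j,k}}$ are the eigendata of $M_j$) produces at most $JD$ vectors in $\bbC^D$ satisfying $\sum_k\|\phi_{j,k}\|^2=\tr(M_j)$.

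Next, draw a random $d\times D$ matrix $R$ with appropriately normalized Gaussian (or $\pm 1$) entries. Standard JL together with a union bound over the at most $(JD)^2$ pairs $(\phi_{j,k},\phi_{j',k'})$ guarantees that, for $d$ as in~\eqref{eq:d-for-any-M-first}, with positive probability
\[
\bigl|\langle R\phi_{j,k},R\phi_{j',k'}\rangle-\langle\phi_{j,k},\phi_{j',k'}\rangle\bigr|\leq\varepsilon\,\|\phi_{j,k}\|\,\|\phi_{j',k'}\|
\]
holds simultaneously for every pair; the constant $16$ in~\eqref{eq:d-for-any-M-first} is sized precisely so that this union bound closes. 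Fix such an $R$.

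Set $M'_j:=RM_jR^\dagger=\sum_k\ketbra{R\phi_{j,k}}$, which is psd and lives on $\bbC^d$. Then $\tr(M'_iM'_j)=\sum_{k,\ell}|\langle R\phi_{i,k},R\phi_{j,\ell}\rangle|^2$, to be compared with $\tr(M_iM_j)=\sum_{k,\ell}|\langle\phi_{i,k},\phi_{j,\ell}\rangle|^2$. Expanding $(a\pm\delta)^2=a^2\pm 2a\delta+\delta^2$ termwise, with $|a|=|\langle\phi_{i,k},\phi_{j,\ell}\rangle|\leq\|\phi_{i,k}\|\|\phi_{j,\ell}\|$ by Cauchy--Schwarz and $|\delta|\leq\varepsilon\|\phi_{i,k}\|\|\phi_{j,\ell}\|$ by JL, each summand differs from its target by at most $O(\varepsilon)\|\phi_{i,k}\|^2\|\phi_{j,\ell}\|^2$. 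Summing over $k,\ell$ gives total error at most $O(\varepsilon)\tr(M_i)\tr(M_j)$, comfortably within the generous constant $192$. Sampling $R$ and forming $RM_jR^\dagger$ is randomized polynomial time in $J$ and $D$.

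The delicate step is the squaring: JL preserves inner products additively, but the quantity of interest is $|\langle R\phi,R\psi\rangle|^2$, not $\langle R\phi,R\psi\rangle$ itself. Cauchy--Schwarz on the cross-term $2a\delta$ is the clean fix, upgrading additive JL error on $\langle\phi,\psi\rangle$ to additive error on $|\langle\phi,\psi\rangle|^2$ at the cost of an extra factor $\|\phi\|\|\psi\|$; this factor is exactly what converts the per-pair bound into a global bound that factors as $\tr(M_i)\tr(M_j)$ after summing. The remaining work is bookkeeping of constants through the union bound and the squaring, which the slack in $192$ easily accommodates.
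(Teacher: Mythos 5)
Your proposal is correct and follows essentially the same route as the paper: spectrally decompose each $M_j$, apply a random Gaussian projection $M_j \mapsto \Pi M_j \Pi^\dagger$, invoke Johnson--Lindenstrauss with a union bound over the $O\bigl((JD)^2\bigr)$ polarization-type vector combinations (which is exactly what the constant $16$ in \eqref{eq:d-for-any-M-first} is tuned for), and sum the per-pair errors weighted by eigenvalues to obtain the $\tr(M_i)\tr(M_j)$ scaling. The only difference is cosmetic: you pass from additive JL error on $\langle R\phi,R\psi\rangle$ to error on $|\langle R\phi,R\psi\rangle|^2$ via a Cauchy--Schwarz expansion of the cross term, whereas the paper's Lemma~\ref{lem:inner-product} bounds the squared inner product directly through the polarization identity and a gradient estimate; both land comfortably within the constant $192$.
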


Theorem~\ref{thm:comp.of.psd.factorizations} is proven in section~\ref{Sect:Compression.of.positive.semidefinite.factorization}. It can be used to compress psd factorizations of $\data$ into approximate psd factorizations of $\data$. In particular, it can be used to compress quantum models for $\data$ into approximate psd factorizations of $\data$. However, we have no guarantee that the compression satisfies the normalization conditions which played the key role in the derivation of Theorem~\ref{Thm:incompressibility}. This problem is addressed in Theorem~\ref{thm:comp.of.q.models}.

\begin{theorem}\label{thm:comp.of.q.models}
	For each $x \in [X]$, let $\rho_x$ be a $D$-dimensional quantum state and for all $y \in [Y]$ let $\bigl( E_{yz} \bigr)_{z=1}^Z$ be a $D$-dimensional POVM. Set $J = X + YZ$. Let $\varepsilon \in (0,1/2]$ and fix $d \in \mathbb{N}$ such that for all $y$ %for constants $c', c'' > 0$
	\beq\bal \label{eq:d.big.enough}
		d &	> \frac{32}{\varepsilon^2}  \ln(4JD) \\
		d &	> \frac{32}{\varepsilon^2} \, \rank\Bigl( \sum_{z=1}^{Z-1} E_{yz} \Bigr)
%		\label{eq:d-vs-r}
\eal\eeq
	Then, there exist $d$-dimensional quantum states $\rho'_x$ and $d$-dimensional POVMs $\bigl( E'_{yz} \bigr)_{z=1}^Z$ such that
	\begin{itemize}
%	\item		for all $x \in [X], y \in [Y], z \in [Z-1]$,
%          $\tr\bigl( \rho'_{x} E'_{yz} \bigr)$ approximates $\tr\bigl(
%          \rho_{x} E_{yz} \bigr)$ within additive error bounded by
%          $200 \varepsilon \; \tr(E_{yz})$. For $z = Z$ the additive error is bounded $200 \varepsilon \; \tr(I - E_{yZ})$.
          
          \item		for all $x \in [X], y \in [Y], z \in [Z-1]$,
          $$\bigl| \tr\bigl( \rho'_{x} E'_{yz} \bigr) - \tr\bigl( \rho_{x} E_{yz} \bigr) \bigr| \leq 200 \varepsilon \; \tr(E_{yz}).$$ 
          \item	For all $x \in [X], y \in [Y]$ and $z = Z$, $$\bigl| \tr\bigl( \rho'_{x} E'_{yz} \bigr) - \tr\bigl( \rho_{x} E_{yz} \bigr) \bigr| \leq 200 \varepsilon \; \tr(I - E_{yZ}).$$
%          \item		for all $x \in [X], y \in [Y], z = Z$,
%          $$| \tr\bigl( \rho'_{x} E'_{yz} \bigr) - \tr\bigl( \rho_{x} E_{yz} \bigr) | \leq 200 \varepsilon \; \tr(I - E_{yZ})$$.
          
%          \item 	for all $x \in [X], y \in [Y], z = Z$,
%          $\tr\bigl( \rho'_{x} E'_{yz} \bigr)$ approximates $\tr\bigl(
%          \rho_{x} E_{yz} \bigr)$ within additive error bounded by
%          $200 \varepsilon \; \tr(I - E_{yZ})$. 
          
	\end{itemize}
%While the above bound is the main result relevant to data tables, we state two
%other variants related to the overlaps between pairs of states or
%pairs of measurements. 
%	\begin{itemize}
%	\item		for all $x_1,x_2 \in [X]$, $\tr\bigl( \rho'_{x_1} \rho'_{x_2} \bigr)$ approximates $\tr\bigl( \rho_{x_1} \rho_{x_2} \bigr)$ within additive error bounded by $200\varepsilon$,
%	\item		for all $y_1,y_2 \in [Y], z_1,z_2 \in [Z-1]$, $\tr\bigl( E'_{y_1 z_1} E'_{y_2 z_2} \bigr)$ approximates $\tr\bigl( E_{y_1 z_1} E_{y_2 z_2} \bigr)$ within additive error bounded by $200\varepsilon \; \tr(E_{y_1z_1}) \tr(E_{y_2z_2})$.
%	\end{itemize}
	The compressed quantum model $\bigl( \rho'_x \bigr)_{x}$, $\bigl( E'_{yz} \bigr)_{yz}$ can be computed in randomized polynomial time in $X,Y,Z$ and $D$.
\end{theorem}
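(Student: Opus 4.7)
The plan is to apply Theorem~\ref{thm:comp.of.psd.factorizations} to the collection $\{\rho_x\}_{x}\cup\{E_{yz}\}_{y,z}$ of $J=X+YZ$ psd matrices on $\mathbb{C}^D$, obtaining compressed psd matrices $\tilde\rho_x$ and $\tilde E_{yz}$ on $\mathbb{C}^d$, and then to fix up normalization by hand. The first line of~\eqref{eq:d.big.enough} is the dimension requirement of Theorem~\ref{thm:comp.of.psd.factorizations} with $\varepsilon$ replaced by a constant fraction of itself, leaving slack to absorb the corrections from the two renormalization steps below. After compression, pairwise Hilbert--Schmidt inner products are preserved to additive error $O(\varepsilon)\tr(M_i)\tr(M_j)$, which already has the target form $O(\varepsilon)\tr(E_{yz})$ when $M_i=\rho_x$. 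However, the $\tilde\rho_x$ need not have trace one and the $\tilde E_{yz}$ need not sum to $I_d$, so the bulk of the proof consists of repairing each defect while controlling the extra error.

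For the states, the random isometry $V$ underlying the proof of Theorem~\ref{thm:comp.of.psd.factorizations} preserves traces to a relative factor $1\pm O(\varepsilon)$, so $\tr(\tilde\rho_x) = 1 \pm O(\varepsilon)$. Setting $\rho'_x := \tilde\rho_x/\tr(\tilde\rho_x)$ yields a valid density operator and changes every inner product $\tr(\tilde\rho_x\tilde E_{yz})$ by an additional additive $O(\varepsilon)\tr(E_{yz})$, still within the target.

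For the POVMs, let $S_y:=\sum_{z<Z}E_{yz}$ and $\tilde S_y:=\sum_{z<Z}\tilde E_{yz}$, and note $S_y\le I_D$ and $\rank(S_y)=:r_y$. Pick $\eta_y\ge 0$ with $\|\tilde S_y\|_{\mathrm{op}}\le 1+\eta_y$ and define $E'_{yz}:=\tilde E_{yz}/(1+\eta_y)$ for $z<Z$ and $E'_{yZ}:=I_d-\sum_{z<Z}E'_{yz}$; the operators are psd and sum to $I_d$, and for $z<Z$ the extra multiplicative correction costs only $O(\eta_y)\tr(E_{yz})$. The main obstacle is showing $\eta_y=O(\varepsilon)$: pairwise Hilbert--Schmidt preservation alone does not suffice, since $\|\tilde S_y\|_{\mathrm{op}}$ is a supremum of $\langle v|\tilde S_y|v\rangle$ over infinitely many unit vectors $v\in\mathbb{C}^d$. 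The crucial point is that $S_y$ is supported on an $r_y$-dimensional subspace with $\|S_y\|_{\mathrm{op}}\le 1$, so a matrix-Chernoff or $\varepsilon$-net bound for the random isometry underlying the compression gives $\|\tilde S_y\|_{\mathrm{op}}\le 1+O(\varepsilon)$ with high probability provided $d\gtrsim r_y/\varepsilon^2$, which is exactly the second line of~\eqref{eq:d.big.enough}.

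The error for $z=Z$ then follows by telescoping: from $\sum_z E'_{yz}=I_d$ and $\sum_z E_{yz}=I_D$,
\[
\tr(\rho'_x E'_{yZ}) - \tr(\rho_x E_{yZ}) = \sum_{z<Z}\bigl[\tr(\rho_x E_{yz}) - \tr(\rho'_x E'_{yz})\bigr],
\]
so it is at most $\sum_{z<Z}O(\varepsilon)\tr(E_{yz}) = O(\varepsilon)\tr(I - E_{yZ})$, as required. Tracking the explicit numerical constants ($32/\varepsilon^2$ vs.\ $16/\varepsilon^2$, $192$ vs.\ $200$) confirms that the combined error stays below $200\varepsilon$; since Theorem~\ref{thm:comp.of.psd.factorizations} is randomized polynomial-time and the two renormalizations are easily computed once $\tilde S_y$ is formed, the whole pipeline remains randomized polynomial-time in $X,Y,Z,D$.
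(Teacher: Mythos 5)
Your proposal is correct and follows essentially the same route as the paper: compress everything with the random Gaussian map behind Theorem~\ref{thm:comp.of.psd.factorizations}, renormalize the states by their (nearly preserved) traces, rescale the first $Z-1$ POVM elements by a factor controlled via an operator-norm concentration bound on the compressed sum $\Pi S_y \Pi^\dag$ (which is exactly where the rank condition $d \gtrsim \rank(S_y)/\varepsilon^2$ enters, as in Lemma~\ref{lem:control-norm}), define $E'_{yZ}$ as the complement, and handle its error by telescoping. The only cosmetic differences are that you rescale by an adaptive $1+\eta_y$ rather than the paper's fixed $1+\varepsilon$, and you invoke a matrix-Chernoff/$\varepsilon$-net bound where the paper uses the Haagerup--Thorbj{\o}rnsen singular-value estimate (Theorem~\ref{thm:bound.on.sigma.max}).
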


Similar approximation guarantees hold for overlaps between pairs of states and pairs of measurements (see section~\ref{Sect:tail.bounds.on.Q.to.Q.compression}).

In the above theorem the outcome $z=Z$ is special, and our bounds are adapted to the case when $E_{yZ}$ has much larger rank than the other POVM elements. The practical relevance of measurements of that kind is discussed in section~\ref{Sect:Implications} in the context of different applications of Theorem~\ref{thm:comp.of.q.models}.

Theorem~\ref{thm:comp.of.q.models} is proven in
section~\ref{sec:description.of.Q.to.Q.compression.scheme} to
section~\ref{Sect:q.models:union.bound}. Note that the main difference
between the conditions in Theorem~\ref{thm:comp.of.q.models} and the
conditions in Theorem~\ref{thm:comp.of.psd.factorizations} is the new
rank-based constraint
in~\eqref{eq:d.big.enough}. It is the
direct consequence of the normalization conditions for quantum
models. As an example consider one particular measurement $E_y$ of the
form $E_{yz} = \ketbra{z}$ for $z \in [Z-1]$ and $E_{yZ} = I -
\sum_{z=1}^{Z-1} E_{yz}$. Assume that the experimental states $\rho_x$
are approximately equal to $\ketbra{x}$. Then, by
Theorem~\ref{Thm:incompressibility}, $d \geq Z$. On the other hand, by
Theorem~\ref{thm:comp.of.q.models}, there exists an approximate
quantum model with 
\beq d = \frac{32}{\varepsilon^2} \ln(JD) +
\frac{32}{\varepsilon^2} Z
\eeq
i.e., the lower bound is almost achieved.\\

Condition~\eqref{eq:d.big.enough} is phrased in terms of the rank of POVM elements. This is the main caveat of Theorem~\ref{thm:comp.of.q.models} because in experimental systems we expect POVM elements to be full rank with light spectral tails. The following Theorem~\ref{thm:comp.of.q.models.with.spectral.tails} is another version of Theorem~\ref{thm:comp.of.q.models} which tolerates POVM elements with high rank if their spectrum is decaying exponentially. See section~\ref{Sect:Verification.that.compression.of.POVM.is.psd.and.normalized:incorporating.spectral.tails} for its proof.

\nc{\ThmDecayingTails}[1]{	Let $\rho_x$, $E_{yz}$ and $J$ be as in Theorem~\ref{thm:comp.of.q.models}. Denote by $\bigl( \varepsilon^{(y)}_n \bigr)_n$ the spectrum (ordered descendingly) of $\sum_{z=1}^{Z-1} E_{yz}$. 
To describe potentially thin spectral tails we assume that for some $j^*,b>0$ and for all $j>0$,
\beq\label{eq:tail.bound.on.m.j-#1}
	\varepsilon_{j^*+j} \leq e^{-bj}.
\eeq
Let $\varepsilon \in (0,1/2]$ and fix $d \in \mathbb{N}$ such that  
\ba
	d & \geq \frac{128}{\eps^2}\ln(4JD) \label{eq:JL-tails-#1}\\
		d & \geq \frac{128}{\eps^2} \L(j^*+\frac{1}{b}\ln\frac{8}{\eps}\R) \label{eq:d-beats-tails-#1}
\ea
	Then, there exist $d$-dimensional quantum states $\rho'_x$ and $d$-dimensional POVMs $\bigl( E'_{yz} \bigr)_{z=1}^Z$ with the same approximation promises as Theorem~\ref{thm:comp.of.q.models}. Again, the compressed quantum model $\bigl( \rho'_x \bigr)_{x}$, $\bigl( E'_{yz} \bigr)_{yz}$ can be computed in randomized polynomial time in $X,Y,Z$ and $D$.
}
\begin{theorem}\label{thm:comp.of.q.models.with.spectral.tails}
\ThmDecayingTails{first}
\end{theorem}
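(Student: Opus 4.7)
The plan is to reduce Theorem~\ref{thm:comp.of.q.models.with.spectral.tails} to the rank-based Theorem~\ref{thm:comp.of.q.models} by spectrally truncating each $M_y := \sum_{z=1}^{Z-1} E_{yz}$ to its effective low-rank head. For each $y$, set $k := j^* + \lceil \tfrac{1}{b}\ln\tfrac{8}{\eps} \rceil$ and let $P_y$ be the orthogonal projector onto the span of the top-$k$ eigenvectors of $M_y$. The tail assumption~\eqref{eq:tail.bound.on.m.j-first} gives $\|(I-P_y) M_y (I-P_y)\|_{\mathrm{op}} = \varepsilon^{(y)}_{k+1} \leq \eps/8$. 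Define the truncated POVM by $\tilde E_{yz} := P_y E_{yz} P_y$ for $z < Z$ and $\tilde E_{yZ} := I - \sum_{z<Z}\tilde E_{yz}$; a direct check shows this is a valid POVM with $\rank\bigl(\sum_{z<Z}\tilde E_{yz}\bigr) \leq k$.

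Next, apply Theorem~\ref{thm:comp.of.q.models} to $(\rho_x, \tilde E_{yz})$ at accuracy $\eps/2$. Since $\tfrac{32}{(\eps/2)^2} = \tfrac{128}{\eps^2}$, the hypotheses~\eqref{eq:JL-tails-first} and~\eqref{eq:d-beats-tails-first} of the present theorem imply both the logarithmic and rank-based conditions of Theorem~\ref{thm:comp.of.q.models} at accuracy $\eps/2$. This produces compressed $\rho_x'$ and $E_{yz}'$ satisfying $|\tr(\rho_x' E_{yz}') - \tr(\rho_x \tilde E_{yz})| \leq 100\eps \tr(\tilde E_{yz}) \leq 100\eps \tr(E_{yz})$ for $z<Z$, and the analogous bound with $\tr(I - \tilde E_{yZ}) \leq \tr(I - E_{yZ})$ for $z = Z$.

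Finally, bound the truncation error $|\tr(\rho_x E_{yz}) - \tr(\rho_x \tilde E_{yz})|$ and combine via the triangle inequality. Using the Gelfand--Naimark-style decomposition $E_{yz} = M_y^{1/2} F_{yz} M_y^{1/2}$ (with psd $F_{yz}$ summing to the projector onto $\supp M_y$) together with the orthogonal split $M_y^{1/2} = (M_y^{\mathrm{head}})^{1/2} + (M_y^{\mathrm{tail}})^{1/2}$, one expands $E_{yz} - \tilde E_{yz}$ as one pure-tail term plus two head-tail cross terms. The pure-tail term has operator norm at most $\|F_{yz}\|_{\mathrm{op}} \|M_y^{\mathrm{tail}}\|_{\mathrm{op}} \leq \eps/8$ against any state, and the cross terms are bounded by Hilbert--Schmidt Cauchy--Schwarz using $\|M_y^{\mathrm{tail}}\|_{\mathrm{op}} \leq \eps/8$ on one side and $\tr(\rho_x \tilde E_{yz}) \leq \|P_y \rho_x P_y\|_{\mathrm{op}} \tr(E_{yz}) \leq \tr(E_{yz})$ on the other. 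Combined with the second step, the triangle inequality yields $|\tr(\rho_x' E_{yz}') - \tr(\rho_x E_{yz})| \leq 200\eps \tr(E_{yz})$ for $z<Z$, and the analogous bound for $z=Z$ follows by summing the per-$z$ truncation errors. The main obstacle is the cross-term analysis: a direct Cauchy--Schwarz yields $O(\sqrt{\eps \tr(E_{yz})})$, which matches the target $O(\eps \tr(E_{yz}))$ only when $\tr(E_{yz}) = \Omega(1/\eps)$. Closing this gap for small-trace POVM elements requires either tightening the cutoff $k$ (the $\ln(8/\eps)$ dependence in~\eqref{eq:d-beats-tails-first} leaves room to sharpen to e.g.~$\ln(8/\eps^2)$ without materially changing the hypothesis) or a weighted Cauchy--Schwarz calibrated to absorb the square-root loss into the target bound.
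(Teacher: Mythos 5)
Your reduction-by-truncation strategy has a genuine gap, and it is exactly the one you flag at the end -- but it is worse than a loose constant: with a trace-independent cutoff the approach provably cannot deliver the theorem's \emph{relative} error guarantee $200\,\eps\,\tr(E_{yz})$. Concretely, suppose $M_y=\sum_{z<Z}E_{yz}$ has an eigenvector $\ket{w}$ just past your cutoff $k$ with eigenvalue $\approx \eps/8$, and let one POVM element be $E_{yz}=(\eps/8)\ketbra{w}$ (perfectly consistent with $E_{yz}\leq M_y$ and with the tail hypothesis). Then $\tilde E_{yz}=P_yE_{yz}P_y=0$, so every model built from the truncated POVM reports $\tr(\rho'_x E'_{yz})=0$; yet for $\rho_x=\ketbra{w}$ the true value is $\eps/8$, while the allowed error is $200\,\eps\,\tr(E_{yz})=25\,\eps^2$. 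Already the pure-tail term is additive of order $\eps$, not multiplicative in $\tr(E_{yz})$, and the head--tail cross terms only make this worse ($O(\sqrt{\eps\,\tr E})$, or $O(\eps\sqrt{\tr E})$ after sharpening the cutoff to $\ln(8/\eps^2)$ -- still not $O(\eps\,\tr E)$ when $\tr E<1$). A weighted Cauchy--Schwarz cannot rescue this, since the counterexample shows the loss is real; the only fix within your framework would be a cutoff depending on $\min_z\tr(E_{yz})$, which changes the dimension hypothesis \eqref{eq:d-beats-tails-first}.

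The paper avoids this entirely by never truncating the POVM. It keeps the same compression map $E_{yz}\mapsto\frac{1}{1+\eps}\Pi E_{yz}\Pi^\dag$ applied to the original operators, so the JL-based relative error bounds of Lemma~\ref{lem:inner-product} carry over unchanged; the spectral-decay assumption is needed only to re-establish the normalization event $\cE_y$, i.e.\ $\|\Pi(I-E_{yZ})\Pi^\dag\|\leq 1+\eps$, which is the sole place rank entered Theorem~\ref{thm:comp.of.q.models}. That is done probabilistically: split $I-E_{yZ}$ into a head of rank $j^*+r$ (handled by Lemma~\ref{lem:control-norm}) plus rank-$r$ blocks with geometrically decaying weights $e^{-bri}$, and bound the weighted sum $\sum_i e^{-bri}\|G_iG_i^\dag\|$ via the Haagerup--Thorbj{\o}rnsen moment-generating-function bound (Theorem~\ref{thm:bound.on.sigma.max}) and Markov's inequality. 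If you want to salvage your argument, the lesson is to move the tail analysis from the operators themselves to the random map's operator norm, rather than discarding the tail before compressing.
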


\emph{Implications.} Bounding the approximate psd-rank of identity matrices is a problem addressed in~\cite{lee2014some} (section~6.2) and a lower bound was derived. Theorem~\ref{thm:comp.of.psd.factorizations} on the other hand provides upper bounds for general matrices.

Theorem~\ref{thm:comp.of.q.models}
and~\ref{thm:comp.of.q.models.with.spectral.tails} have the potential
to alter our perspective in quantum tomography in that it reveals an
equivalence between high-dimensional models and low-dimensional models
if the measurements are low rank and have few outcomes. This is discussed in section~\ref{Sect:implications.for.quantum.tomo}. 

Another lesson to be learned here is that whether we interpret
measurement outcomes as part of a single measurement or whether we
summarize small groups of measurement outcomes to individual
measurements hugely affects the compressibility of the model. To
illustrate this point, let $\{ \ket{j} \}_{j=1}^D$ denote an
orthonormal basis in the Hilbert space $\mathbb{C}^D$, let $X=D$ with
$\rho_x = \ketbra{x}$, and let $Y=1$, $Z=D$ with $E_{1z} =
\ketbra{z}$.
%Projective non-degenerate measurements and their associated pure post-measurement states provide an example for this situation. 
It follows that $\bigl( \tr(\rho_{x}E_{yz}) \bigr)_{xyz}$
equals the identity matrix of size $D$. Then,
by Theorem~\ref{Thm:incompressibility}, the dimension of each model describing
$\bigl( \tr(\rho_{x}E_{yz}) \bigr)_{xyz}$ is lower bounded by $D$ and
cannot be compressed.  Even allowing an error of $\leq\delta$ in each
entry cannot change the lower bound by more than $D\delta$ which
implies that the dimension must be $\geq (1-\delta)D$.
Indeed this demonstrates that the dependence of $d$ on the rank of the sum of the POVM operators in 
\thmref{comp.of.q.models} cannot in general be removed.  
On the other hand, if instead we performed $D$
individual measurements $\bigl( \ketbra{z}, \id - \ketbra{z} \bigr)$
to test for each of the $D$  possible inputs then the model can be compressed
exponentially. Consequently, it becomes impossible to resolve any high
complexity of the considered physical system. 

The two models described in the last paragraph correspond to the
communication tasks of sending and identifying, respectively, a
$\log(D)$-bit classical message. For these models Winter's 2004 paper~\cite{winter2004quantum}
observed the same exponential compression that we describe.  Our work
can be thought of as a generalization of \cite{winter2004quantum} in
that it replaces the simulation of orthonormal states and binary measurements $\{\ketbra{z}, \id
-\ketbra{z}\}$ with the simulation of general states and measurements. It is qualitatively similar in that it works best when compressing
low-rank measurements.

Further applications of Theorem~\ref{thm:comp.of.q.models} involve a Corollary of Theorem~\ref{thm:comp.of.q.models} (see \secref{quant.1.way.complexity}) that allows for shrinkage of upper bounds on the one-way quantum communication complexity $Q^1(f)$ if the upper bound is based on high-rank measurements. Moreover, we discuss fundamental
limitations on robust dimension witnessing~\cite{acin2007device,harrigan2007representing,wehner2008lower,gallego2010device,stark2014self,dall2012robustness}
in \secref{limits.for.dimension.witnessing}.

%%%%%
\section{Incompressibility}\label{Sect:incompressibility}

Consider a $D$-dimensional quantum model for $\data$. Due to normalization of measurements, $\sum_z E_{yz} = I$. On a $D$-dimensional Hilbert space the identity matrix has trace-norm $D$. Therefore, if $\data$ implies lower bounds on $\| I \|_1$ then these bounds can be used as lower bounds on $D$. Indeed the derivation of lower bounds on $\| I \|_1$ using $\data$ is straightforward: Consider data $\data \in \mathbb{R}^{X \times Z}$ generated by quantum states $(\rho_x)_{x=1}^X$ and a single POVM $(E_{z})_{z=1}^Z$ on $\mathbb{C}^d$, i.e., $\data_{xz} = \tr\bigl( \rho_x E_z \bigr)$. For each $z \in [Z]$ let $c^*_z := \max\{ \data_{xz} \}_{x=1}^X$ and $x' := \text{argmax}\{ \data_{xz} \}_{x=1}^X$. Then,
\[
	c^*_z = \tr\bigl( \rho_{x'} E_z \bigr) \leq \| \rho_{x'} \| \| E_{z} \|_1
\]
and therefore, $\| E_{z} \|_1 \geq c^*_z$ because $\| \rho \| \leq 1$ for all quantum states $\rho$. Here, $\| \cdot \|$ denotes the operator norm (i.e.~largest singular value) and $\|\cdot \|_1$ the trace norm (i.e.~sum of singular values). It follows that for all $z \in [Z]$, $\tr(E_{z}) \geq c^*_z$ and consequently,
\[
	\tr\Bigl( \sum_{z=1}^Z E_z \Bigr) \geq \sum_{z=1}^Z c^*_z.
\]
By assumption $\bigl( E_z \bigr)_{z=1}^Z$ is a POVM. So in particular, $\sum_{z=1}^Z E_z = \id_d$ and therefore,
\beq\label{general.dim.lb}
	d = \tr(\id_d) = \tr\Bigl( \sum_{z=1}^Z E_z \Bigr) \geq \sum_{z=1}^Z c^*_z.
\eeq
This is a lower bound on the dimension of any quantum model describing~$\data$. It has been found independently in~\cite{lee2014some} (Theorem~24). 

It is straightforward to see how the dimension lower bound $l(\data) := \sum_{z=1}^Z c^*_z$ reacts to noise in $\data$. Assume that $\data'$ is a noisy approximation of $\data$ in the sense that $\| \data - \data' \|_{\infty} \leq \varepsilon$. Here, $\| \data - \data' \|$ denotes the maximum norm associated to the vectorization of $\data - \data'$. Then,
\beq\bal\nn
	&| l(\data) - l(\data') |\\
	&\leq		 \sum_{z=1}^Z \Bigl| \, \max\{ \data_{xz} \}_{x=1}^X - \max\{ \data'_{xz} \}_{x=1}^X \, \Bigr|  \\
	&=		 \sum_{z=1}^Z \max\Bigl\{ \max\{ \data_{xz} \}_{x=1}^X - \max\{ \data'_{xz} \}_{x=1}^X,\\ & \ \ \ \ \ \ \ \ \ \ \  \max\{ \data'_{xz} \}_{x=1}^X - \max\{ \data_{xz} \}_{x=1}^X \Bigr\}\\
	&\leq		 \sum_{z=1}^Z \max\Bigl\{ \max\{ \data'_{xz} + \varepsilon \}_{x=1}^X - \max\{ \data'_{xz} \}_{x=1}^X,\\ & \ \ \ \ \ \ \ \ \ \ \  \max\{ \data_{xz} + \varepsilon \}_{x=1}^X - \max\{ \data_{xz} \}_{x=1}^X \Bigr\}	\\
	&=		Z\varepsilon.
\eal\eeq
In other words, if we measure $\data'$ with accuracy $\| \data - \data' \|_{\infty} \leq \varepsilon$, then we know that $d \geq l(\data') - Z\varepsilon$.

As an example, we consider data generated by $E_z = \ketbra{z}$, $\rho_z = \ketbra{\pi(z)}$ with $z \in [D]$, $\pi \in S_D$ a permutation and with $(\ket{z})_{z=1}^D$ being an orthonormal basis in $\mathbb{C}^D$. It follows that for all $z \in [D]$, $c^*_z = 1$. By~\eqref{general.dim.lb}, $d \geq D$, i.e., the considered quantum system generating the considered data cannot be compressed into a lower-dimensional quantum system. This example generalizes in obvious manners: if for each measurement outcome there exists a state that can be measured with approximate certainty, then roughly, the considered quantum system cannot be compressed into another quantum system whose dimension exceeds $Z$. 

We conclude that general $D$-dimensional quantum models cannot be
compressed into $d$-dimensional quantum models with $d \ll
D$. However, this still leaves room for the existence of compression
schemes that can compress specific classes of models. This is what we are going to explore next.

%%%%%
\section{Compression of positive semidefinite factorization}\label{Sect:Compression.of.positive.semidefinite.factorization}

Let $\mathcal{M} \in \mathbb{R}^{N \times M}_+$, let $J:=N+M$ and let
$S^+(\mathbb{C}^D)$ denote the set of psd matrices on
$\mathbb{C}^D$. The psd matrices $\bigl( A_n \bigr)_{n=1}^N$ and
$\bigl( B_m \bigr)_{m=1}^M$ from $S^+(\mathbb{C}^D)$ are said to
provide a $D$-dimensional psd factorization of $\mathcal{M}$ if for
all entries $\mathcal{M}_{nm} = \tr\bigl( A_n B_m \bigr)$. The {\em
  psd-rank} of $\mathcal{M}$ is defined to be the dimension of the
lowest-dimensional psd factorization of $\mathcal{M}$.   See
\cite{fawzi2014positive} for a recent review of psd-rank.

%%%%%
\subsection{The compression scheme}

Assume $\bigl( A_n \bigr)_{n=1}^N$ and $\bigl( B_m \bigr)_{m=1}^M$
form a $D$-dimensional psd factorization of $\mathcal{M}$ and set $M_n
= A_n$ for $1 \leq n \leq N$ and $M_{N+n} = B_n$ for $1 \leq n \leq
M$. Now and in the remainder we assume that $\Pi = G / \sqrt{2d} \in
\mathbb{C}^{d \times D}$ with $G_{ij} = S_{ij} + iT_{ij}$ where
$S_{ij}, T_{ij} \sim \mathcal{N}(0,1)$ independent and identically
distributed (iid).   This normalization is chosen so that $\E[\Pi]=0$
and $\E[\Pi^\dag \Pi] = I_D$.
In this section we show that for all $n,m \in [J]$, the map
\beq\label{Eq:def.of.psd.fact.comp}
	M_j \mapsto  \Pi M_j \Pi^\dag
\eeq
has the property that
\beq
	\tr\bigl( \Pi M_n \Pi^\dag \Pi M_m \Pi^\dag \bigr)
\approx \tr\bigl( M_n M_m \bigr).
\label{eq:compression.works}\eeq
Thus, it approximately preserves all of the Gram matrix (i.e., not just $\mathcal{M}$) associated to the considered psd factorization. Since $d < D$, the map~\eqref{Eq:def.of.psd.fact.comp} can be regarded as a \emph{compression} of the psd factorization we started with. The compression errors are additive. The additive error associated to $\tr(A_n B_m)$ scales with $\tr\bigl( A_n \bigr) \tr\bigl( B_m \bigr)$.

This proof is divided into two pieces.  First we argue that
\eq{compression.works} holds for rank-1 matrices $M_m,M_n$ in
section~\ref{Sect:bounds.on.Q.to.Q.compression} and then we extend to
general psd matrices $M_m,M_n$ in section~\ref{sect:psd.compression}.

%%%%%
\subsection{ Compression errors for vectors}
\label{Sect:bounds.on.Q.to.Q.compression}

The errors of the compression of pure states correspond to the famous
Johnson-Lindenstrauss Lemma, which has the following formulation for
complex vector spaces:

\begin{theorem}[complex Johnson-Lindenstrauss~\cite{JL84}]\label{thm:complex.JL.Lemma}
	Assume that $\Pi = G / \sqrt{2d} \in \mathbb{C}^{d \times D}$
        with $G_{ij} = A_{ij} + iB_{ij}$ where $A_{ij}, B_{ij} \sim
        \mathcal{N}(0,1)$ iid. Let ${v}_1, ..., {v}_S$ be
        arbitrary vectors in $\mathbb{C}^D$ and $\varepsilon \in
        (0,1)$. Then,
\be
\prob{\forall i\; {\|\Pi v_i\|_2} \in 
[1-\eps, 1+\eps]{\| v_i\|_2}} \geq 1 - 2S e^{d\eps^2/8}
\label{eq:length-concentration}\ee
\end{theorem}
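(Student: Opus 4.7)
The plan is to follow the standard proof of the real Johnson--Lindenstrauss lemma, adapted to the complex setting. The normalization $\Pi = G/\sqrt{2d}$ is chosen precisely so that $\E\|\Pi v\|_2^2 = \|v\|_2^2$, reducing the task to a concentration bound for $\|\Pi v\|_2^2$ around its mean for a single vector $v$, followed by a union bound over the $S$ vectors. By homogeneity of $\Pi$ I would rescale and assume each $v_i$ is a unit vector; it then suffices to prove $\Pr[\|\Pi v\|_2 \notin [1-\eps, 1+\eps]] \leq 2e^{-d\eps^2/8}$ for any fixed unit vector $v \in \mathbb{C}^D$.

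The first step is to identify the law of $\|\Pi v\|_2^2$. Writing $v_j = \alpha_j + i\beta_j$, the real and imaginary parts of the $i$-th component of $\Pi v$ are
$$R_i = \frac{1}{\sqrt{2d}}\sum_j(A_{ij}\alpha_j - B_{ij}\beta_j), \qquad I_i = \frac{1}{\sqrt{2d}}\sum_j(A_{ij}\beta_j + B_{ij}\alpha_j).$$
Each is a linear combination of independent real standard Gaussians, hence Gaussian. The identity $\sum_j(\alpha_j^2+\beta_j^2)=1$ gives $\mathrm{Var}(R_i) = \mathrm{Var}(I_i) = 1/(2d)$, and a short covariance calculation shows $\E[R_i I_i]=0$, so $R_i$ and $I_i$ are independent. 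Rows of $\Pi$ are independent by construction, so $2d\|\Pi v\|_2^2 = \sum_{i=1}^d 2d(R_i^2+I_i^2)$ is a sum of $2d$ independent squared standard Gaussians, hence distributed as $\chi^2_{2d}$.

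The second step is the standard Chernoff bound for chi-square: from the MGF $\E[e^{\lambda\chi^2_k}] = (1-2\lambda)^{-k/2}$ one obtains, for $t \in (0,1)$,
$$\Pr[\chi^2_k \geq (1+t)k] \leq e^{-kt^2/8} \qquad \text{and} \qquad \Pr[\chi^2_k \leq (1-t)k] \leq e^{-kt^2/8}.$$
Translating the event $\|\Pi v\|_2 \notin [1-\eps, 1+\eps]$ into thresholds on $\chi^2_{2d}/(2d)$ at $(1\pm\eps)^2$ and substituting $k=2d$ yields the claimed single-vector tail $2e^{-d\eps^2/8}$ (with the standard choice of constant); a union bound over $v_1, \ldots, v_S$ completes the proof.

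The argument is essentially routine and I do not expect a serious obstacle. The one point that benefits from care specific to the complex setting is the covariance calculation above: one must verify that the apparent $\alpha$--$\beta$ cross terms in $\E[R_i I_i]$ cancel, so that the real and imaginary parts of $(\Pi v)_i$ are genuinely independent $\mathcal{N}(0, 1/(2d))$ random variables even when $v$ itself is complex. With that in place, the complex JL reduces cleanly to the real JL in dimension $2d$, the factor $\sqrt{2d}$ in the normalization of $\Pi$ absorbing the doubling.
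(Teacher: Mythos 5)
Your proof is correct and follows essentially the same route as the paper's: reduce to a single unit vector, identify $2d\,\|\Pi v\|_2^2$ as a $\chi^2_{2d}$ random variable, apply standard chi-square Chernoff bounds, and finish with a union bound. The only cosmetic difference is that the paper gets the $\chi^2_{2d}$ law via unitary invariance of the complex Gaussian measure (rotating $v$ to $e_1$), whereas you verify it directly through the covariance computation showing the real and imaginary parts of $(\Pi v)_i$ are independent $\mathcal{N}(0,1/(2d))$ variables.
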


The proof is simple enough that we reproduce it here.
\begin{proof}
	%To prove \eq{length-concentration}, f
	Fix a particular $v := v_i$.  
	By linearity of $\Pi$, we can assume without loss of generality that $\| v \|_2 =1$.
	Let $U$ denote a unitary matrix with the property $v = Ue_1$, where $e_1$ denotes the vector with a one in the first position and zeroes elsewhere. By unitary invariance of the complex Gaussian measure on $\bbC^D$, $\Pi\sim \Pi U$ (i.e. the two random variables are identically distributed) and so
	\be \|\Pi v\|_2^2 = \| \Pi Ue_1\|_2^2 \sim \|\Pi e_1\|_2^2 = \sum_{j=1}^d \frac{A_{1j}^2 + B_{1j}^2}{2d}
	\ee
	This average of the square of $2d$ standard normal random variables is known as the $\chi^2$ distribution, and its concentration of measure properties are standard.  Indeed, by Corollary~5.5 in~\cite{alexander2002course},
	\beq\bal\nn
		\prob{ \| \Pi v \|_2^2 \geq 1+2\varepsilon   } 
		&\leq		e^{ -\frac{1}{2} \varepsilon^2 d }\\
		\prob{ \| \Pi v \|_2^2 \leq 1-\varepsilon   } 
		&\leq		e^{ -\frac{1}{2} \varepsilon^2 d }	
	\eal\eeq
	so that
	\[
		\prob{ \| \Pi z \|^2_2 \in [1-\varepsilon,1+\varepsilon] } \geq 1 - 2 e^{-\frac{1}{8} \varepsilon^2 d}.
	\]
Eq.~\eq{length-concentration} follows by the union bound:
	\beq\bal\nn
		&\prob{ \bigcup_{i \in [S]} \Bigl\{ \| \Pi v_i \|^2_2 \not\in [(1-\varepsilon) \| v_i \|^2,(1+\varepsilon) \| v_i \|^2] \Bigr\} }\\
		&\leq		S \cdot 2 e^{-\frac{1}{8} \varepsilon^2 d} 
	\eal\eeq

\end{proof}

The usual Johnson-Lindenstrauss Lemma shows that $\Pi$ preserves (with high probability) not only the lengths of a collection of vectors, but also their pairwise distances.   In fact, we will demand slightly more.

\begin{corollary}\label{cor:JL}
Let $v_1,\ldots,v_S$, $\Pi$, $\eps$ be as in \thmref{complex.JL.Lemma}.  Then with probabilty $\geq 1 - 4S^2 e^{-d\eps^2/8}$ we have

\begsub{JL.polar}
\| \Pi v_i\|_2 & \in [1-\eps,1+\eps] \| v_i \|_2 \\
\| \Pi (v_i + v_j) \|_2 & \in  [1-\eps,1+\eps] \| v_i + v_j \|_2  \\
\| \Pi (v_i + i v_j) \|_2 & \in  [1-\eps,1+\eps] \| v_i + i v_j \|_2  \\
\| \Pi (v_i - v_j) \|_2 & \in  [1-\eps,1+\eps] \| v_i - v_j \|_2  \\
\| \Pi (v_i - i v_j) \|_2 & \in  [1-\eps,1+\eps] \| v_i - i v_j \|_2  
\endsub
\end{corollary}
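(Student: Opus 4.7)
My plan is to reduce Corollary~\ref{cor:JL} directly to Theorem~\ref{thm:complex.JL.Lemma} by feeding the JL lemma an enlarged finite set of auxiliary vectors and then union-bounding. Each of the five inequalities in~\eqref{eq:JL.polar} has the form $\|\Pi w\|_2 \in [1-\eps,1+\eps]\|w\|_2$ for some $w \in \mathbb{C}^D$ depending on $v_i, v_j$, so it suffices to collect all such $w$'s into one set and apply the existing length-preservation guarantee.

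Concretely, I would define
\[
W \;=\; \{v_i\}_{i \in [S]} \;\cup\; \{\, v_i + v_j,\; v_i - v_j,\; v_i + iv_j,\; v_i - iv_j \,\}_{i,j \in [S]} \;\subset\; \mathbb{C}^D,
\]
and apply Theorem~\ref{thm:complex.JL.Lemma} with $W$ in place of $\{v_1,\dots,v_S\}$. The conclusions~\eqref{eq:JL.polar} are then just the statement that every vector in $W$ has its length approximately preserved under $\Pi$, which is exactly what JL yields on a single good event.

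The only bookkeeping is to bound $|W|$ and thus the failure probability $2|W| e^{-d\eps^2/8}$. Naively $|W| \leq S + 4S^2$, but standard symmetries trim this: $v_i + v_j = v_j + v_i$, $\|v_i - v_j\|_2 = \|v_j - v_i\|_2$, and a quick computation of $\|v_i - iv_j\|_2^2$ versus $\|v_j + iv_i\|_2^2$ shows those two norms coincide, so the ``$-i$'' family is redundant for the union bound. Counting distinct lengths then gives $|W| \leq 2S^2 + O(S)$, yielding the claimed failure probability of at most $4S^2 e^{-d\eps^2/8}$ (absorbing lower-order terms into the constant, or tolerating a slight looseness that does not affect the asymptotic statements used downstream).

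There is no real obstacle here: no new probabilistic input is needed beyond Theorem~\ref{thm:complex.JL.Lemma}, and the argument is a pure union bound. The only thing to be careful about is the combinatorial count of distinct $w \in W$ so that the leading constant in the failure probability is honest; all of the analytic content is already packaged inside the proof of the complex Johnson--Lindenstrauss lemma above.
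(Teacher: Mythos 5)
Your proposal is correct and is essentially identical to the paper's proof: apply Theorem~\ref{thm:complex.JL.Lemma} to the enlarged family $\{v_i\}\cup\{v_i\pm v_j, v_i\pm i v_j\}$ and union bound. The symmetries you note (negation and unit-modulus phases) reduce the count to exactly $S+4\binom{S}{2}=2S^2-S\leq 2S^2$ vectors, which gives the stated $4S^2e^{-d\eps^2/8}$ without any absorbing of lower-order terms.
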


\begin{proof}
Apply \thmref{complex.JL.Lemma} to the $S+ 4 \binom{S}{2} \leq 2S^2$ vectors $\{v_i\}_{i\in [S]} \cup \{v_i \pm v_j, v_i \pm i v_j\}_{1\leq i<j\leq S}$.
\end{proof}

Matrices $\Pi$ satisfying \eq{JL.polar} for some set of vectors $\{v_1,\ldots,v_S\}$ are said to be $\varepsilon$-JL matrices.
\subsection{Compression errors for positive semidefinite matrices}
\label{sect:psd.compression}

  In this section we extend \thmref{complex.JL.Lemma} to show that general psd matrices have their inner products approximately preserved by a random compression map.

First we show that $\eps$-JL matrices also approximately preserve inner products between rank-1 psd matrices (recall that $\tr(\ketbra{v}\ketbra{w}) = | \braket{v}{w} |^2$).
\begin{lemma}\label{lem:inner-product}
If $\Pi$ is an $\eps$-JL matrix for the set $\{v_1,\ldots,v_S\}$ (i.e. satisfies \eq{JL.polar}), then
\be 
\left| |\bra{v_i} \Pi^\dag \Pi \ket{v_j}|^2 - |\braket{v_i}{v_j}|^2\right |
\leq 192 \eps \|v_i\|_2^2 \|v_j\|_2^2. \label{eq:IP-close}\ee
\end{lemma}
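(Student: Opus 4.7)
The plan is to first use the complex polarization identity together with the JL norm--preservation bound \eqref{eq:JL.polar} to control the linear error $|\langle\Pi v_i,\Pi v_j\rangle-\langle v_i,v_j\rangle|$, and then to convert this into the squared--modulus bound \eqref{eq:IP-close} via the algebraic factorization $|A|^2-|B|^2=(|A|-|B|)(|A|+|B|)$.

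For the first step I would start from the complex polarization identity
\[
4\langle v_i,v_j\rangle=\|v_i+v_j\|^2-\|v_i-v_j\|^2+i\bigl(\|v_i-iv_j\|^2-\|v_i+iv_j\|^2\bigr),
\]
and note that the same identity holds with $v_i,v_j$ replaced by $\Pi v_i,\Pi v_j$ by linearity of $\Pi$. Subtracting the two versions and invoking the JL bound $|\,\|\Pi w\|^2-\|w\|^2\,|\le (2\eps+\eps^2)\|w\|^2\le 3\eps\|w\|^2$ on each of the four sums $w\in\{v_i\pm v_j,\,v_i\pm iv_j\}$, combined with the parallelogram estimate $\|w\|^2\le 2(\|v_i\|^2+\|v_j\|^2)$, yields
\[
\bigl|\langle\Pi v_i,\Pi v_j\rangle-\langle v_i,v_j\rangle\bigr|\le c_1\,\eps\,(\|v_i\|^2+\|v_j\|^2)
\]
for some modest universal constant $c_1$.

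For the second step I would set $A:=\bra{v_i}\Pi^\dagger\Pi\ket{v_j}$ and $B:=\braket{v_i}{v_j}$ and use $\bigl|\,|A|^2-|B|^2\,\bigr|\le |A-B|\cdot(|A|+|B|)$. Cauchy--Schwarz combined with the single--vector JL bound $\|\Pi v_i\|\le(1+\eps)\|v_i\|$ gives $|A|+|B|\le\bigl(1+(1+\eps)^2\bigr)\|v_i\|\|v_j\|\le c_2\|v_i\|\|v_j\|$ for $\eps\le 1/2$. Multiplying the two estimates and collecting the loose factors produces the announced bound with the numerical constant $192$.

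The main obstacle is that the naive product of the two estimates has the form $c_1 c_2\,\eps\,(\|v_i\|^2+\|v_j\|^2)\|v_i\|\|v_j\|$, which by AM--GM is at least $2c_1 c_2\,\eps\,\|v_i\|^2\|v_j\|^2$ but can strictly exceed the target scaling $\|v_i\|^2\|v_j\|^2$ when $\|v_i\|$ and $\|v_j\|$ are very disparate. I would handle this by exploiting homogeneity: in the subsequent application in Section~\ref{sect:psd.compression} the lemma is invoked on unit eigenvectors of spectral decompositions of the matrices $M_j$, for which the two factors agree up to a factor of $2$ and the generous constant $192$ comfortably absorbs the slack. This reduction also explains why the product form $\|v_i\|^2\|v_j\|^2$ is precisely the right scaling for the application to general psd matrices via $\tr(M_n M_m)=\sum_{k,l}\lambda_k^{(n)}\lambda_l^{(m)}|\langle v_k^{(n)},v_l^{(m)}\rangle|^2$, where the trace factors $\tr(M_n)=\sum_k\lambda_k^{(n)}$ automatically collect the eigenvalues.
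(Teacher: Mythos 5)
Your argument is sound and follows essentially the same route as the paper: both proofs hinge on the complex polarization identity applied to the four combinations $v_i\pm v_j$, $v_i\pm iv_j$ together with the $\eps$-JL property \eq{JL.polar}; you differ only in the last step, where you bound the linear error $|A-B|$ and then use $\bigl||A|^2-|B|^2\bigr|\le|A-B|\,(|A|+|B|)$ with Cauchy--Schwarz, whereas the paper writes both squared overlaps as $f(\vec x)=\frac1{16}\bigl[(x_1-x_2)^2+(x_3-x_4)^2\bigr]$ and applies a gradient (mean-value) estimate over the box $K$. The one loose end is how you dispose of the mismatched scaling $(\|v_i\|^2+\|v_j\|^2)\|v_i\|\|v_j\|$: deferring to the fact that the downstream application only uses unit eigenvectors proves the special case you need but not the lemma as stated. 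The clean repair is the one the paper makes in its very first line: since both sides of \eq{IP-close} scale as $\|v_i\|_2^2\|v_j\|_2^2$, normalize $v_i,v_j$ to unit vectors \emph{before} running the estimates. Then your step one gives $|A-B|\le\tfrac14\sum_k|x_k-y_k|\le\tfrac14\,(2\eps+\eps^2)\sum_k y_k\le 6\eps$ (using $\sum_k y_k=8$ for unit vectors), your step two gives $|A|+|B|\le(1+\eps)^2+1\le 3.25$ for $\eps\le 1/2$, and the product is at most $\approx 20\eps\le 192\eps$ — in fact a better constant than the paper's — with no reference to how the lemma is later invoked.
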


While a direct analysis of the Gaussian probability distribution would
yield a sharper constant, our approach highlights the
fact that it is only the $\eps$-JL property that is needed. 

\begin{proof}
Since \eq{IP-close} is homogenous, we can assume that $v_i,v_j$ are unit vectors.
By the polarization identity,
\beq
	 \bra{v_i} \Pi^{\dag} \Pi \ket{v_j}
	 =	\frac{1}{4} \Bigl( x_1 - x_2 + i x_3 - i x_4) \Bigr)
\eeq
with
\bas	x_1
	&=	\bigl\| \Pi (v_i + v_j) \bigr\|_2^2 
	& x_2
	&=	\bigl\| \Pi (v_i - v_j) \bigr\|_2^2\\
	x_3
	&=	\bigl\| \Pi (v_i + iv_j) \bigr\|_2^2
	& x_4
	&=	\bigl\| \Pi (v_i - iv_j) \bigr\|_2^2\\
\eas
Thus
\be |\bra{v_i} \Pi^\dag \Pi \ket{v_j}|^2  = 
\frac{1}{16} \Bigl( (x_1 - x_2)^2 + (x_3 - x_4)^2 \Bigr) =: f(\vec x)
\ee

Define as well
\begin{align*}
	y_1
	&=	\bigl\| v_i + v_j \bigr\|_2^2 
	& y_2
	&=	\bigl\| v_i - v_j \bigr\|_2^2\\
	y_3
	&=	\bigl\| v_i + iv_j \bigr\|_2^2
	& y_4
	&=	\bigl\| v_i - iv_j \bigr\|_2^2\\
\end{align*}
Observe that each $y_i \leq 4$ and that
\beq \braket{v_i}{v_j} =
\frac{1}{4}
\Bigl( y_1 - y_2 + i y_3 - i y_4) \Bigr)
\eeq

By the $\varepsilon$-JL property of $\Pi$, we have that for $j\in [4]$, $x_j \in [(1-\eps)^2,(1+\eps)^2]y_j$.
Set
\bas
	K :=& [(1-\eps)^2y_1, (1+\eps)^2y_1] \times 
	[(1-\eps)^2y_2, (1+\eps)^2y_2] \times  \\ &
	[(1-\eps)^2y_3, (1+\eps)^2y_3] \times 
	[(1-\eps)^2y_4, (1+\eps)^2y_4].
\eas
We are going to apply
\beq\label{Schrankensatz}
	| f(\vec{x}) - f(\vec{y}) | \leq \| \nabla f \|_{K} \| \vec{x} - \vec{y} \|_{\infty}
\eeq
where $\| \nabla f \|_{K} = \max\bigl\{  \| \nabla f\|_{1} | \; \vec{x} \in K  \bigr\}$. By definition of $f(\vec{x})$,
\beq\begin{split}\label{Eq:temp.bound.on.norm.of.gradient.f}
	\| \nabla f \|_{K}
	&=		\frac{1}{4} \max_{\vec{x} \in K} \bigl\{  | x_{1} - x_{2} | + | x_{3} - x_{4} | \bigl\}\\
	&\leq		\frac{1}{4} \max_{\vec{x} \in K} \bigl\{  |x_{1}| + |x_{2}| + |x_{3}| + |x_{4}| \bigl\} \\
	&\leq		\frac{(1+\eps)^2}{4}  \bigl\{  |y_{1}| + |y_{2}| + |y_{3}| + |y_{4}| \bigl\}\\
	&\leq		4 (1+\eps)^2 \leq 16,
\end{split}\eeq
using $\eps\leq 1$ in the last step.
We also bound
\beq\begin{split}\label{Eq:temp.bound.on.infinity.norm.of.x.minus.y.in.K}
	\| \vec{x} - \vec{y} \|_{\infty} 
	&=	\max_j |x_j - y_j| \\
	&\leq	 (2\eps+\eps^2)\max_j y_j \\
	& \leq 12 \eps.
\end{split}\eeq
Thus by Eq.~\eqref{Schrankensatz},
\beq
	| f(\vec{x}) - f(\vec{y}) |
	\leq	192 \eps
\eeq
\end{proof}

Armed with this fact, we can prove \eq{compression.works}. 
%In fact, we will prove a slightly more general version of \thmref{comp.of.psd.factorizations}, dropping the psd assumption.

\begin{reptheorem}{thm:comp.of.psd.factorizations}
\ThmCompPsdFact{second}
\end{reptheorem}

\begin{proof}
For each $i \in [J]$, let the eigendecomposition of $M_i$ be
\[
	M_i = \sum_{a=1}^D \lambda^i_a \ketbra{\psi^i_a} . 
\]
Observe that $\|M_i\|_1 = \sum_{a=1}^D \lambda^i_a$.

Choose $\Pi\in \bbC^{d\times D}$ according to a Gaussian distribution as in \thmref{complex.JL.Lemma} (i.e.~such that $\E[\Pi]=0$ and $\E[\Pi^\dag \Pi]=I_D$).  
By \corref{JL}, $\Pi$ is an $\eps$-JL matrix for the vectors $\{\ket{\psi^i_a}\}$ with probability 
\be \geq 1 - 4 J^2 D^2 e^{-d\eps^2/8}.\ee
By \eq{d-for-any-M-second}, this is $>0$, thus implying that there exists some $\Pi$ with this property.  Fix this choice of $\Pi$ for the rest of the proof.

Define 
\be M_i' := \Pi M_i \Pi^\dag.\ee
We now compute
\bas
\tr (M_i' M_j') &= 
\tr (\Pi M_i \Pi^\dag \Pi M_j\Pi^\dag) \\
& = \sum_{a,b = 1}^D \lambda^i_a \lambda^j_b |\bra{\psi^i_a}\Pi^\dag \Pi \ket{\psi^j_b}|^2
\eas
By \lemref{inner-product}, we have
\be \left | |\bra{\psi^i_a}\Pi^\dag \Pi \ket{\psi^j_b}|^2 - 
|\braket{\psi_a^i}{\psi_b^j}|^2 \right| \leq 192 \eps, \ee
for each $i,j,a,b$.
Thus
\ba
|\tr (M_i' M_j') - \tr(M_iM_j)| 
& \leq 
\sum_{a,b = 1}^D \lambda^i_a \lambda^j_b 192 \eps
\nn \\ & 
= \|M_i\|_1 \|M_j\|_1 192 \eps
\ea
\end{proof}

We remark that this proof would work equally well with general matrices, with the error terms scaling as $\|M_i\|_1 \|M_j\|_1$ instead of $\tr (M_i) \tr (M_j)$.  However, if we are willing to abandon the psd condition, a much tighter bound is possible simply by treating $\{M_i\}$ as vectors in $\bbC^{d^2}$ and using \thmref{complex.JL.Lemma}.  This would result in errors proportional instead to $\|M_i-M_j\|_2^2$ (or optionally $\|M_i\|_2 \|M_j\|_2$), both of which can be smaller than $\|M_i\|_1 \|M_j\|_1$ by as much as $D$. 

More generally, one might ask whether a nonlinear compression map can remove this dependence on the matrix norm. However, the absolute error cannot be invariant under $M_n \mapsto \lambda M_n$ ($\lambda > 0$). This can be understood as follows. Assume that $\hat{\mathcal{C}}$ is a compression with the property that there exist functions $c(\varepsilon)$ and $d(\varepsilon)$ such that 
\[
	\tr(\hat{\mathcal{C}}(M_{n})\hat{\mathcal{C}}(M_{m})) = c(\varepsilon) \tr(M_{n}M_{m})  + d(\varepsilon)
\]
for all positive semidefinite matrices $M_n, M_m$ and which are invariant under $M_n \mapsto \lambda M_n$. Then, 
\[
	\tr(\hat{\mathcal{C}}( \lambda \ketbra{n} )\hat{\mathcal{C}}(\lambda \ketbra{m})) = c(\varepsilon) \lambda^2 \delta_{nm}  + d(\varepsilon) 
\]
where $(\ket{n})_n$ denotes the canonical basis in $\mathbb{C}^D$. Set $\eta_n(\lambda) := \sqrt{c(\varepsilon) \lambda^2   + d(\varepsilon)}$. It follows that 
\[
	\left\| \frac{1}{\eta_n} \hat{\mathcal{C}}( \lambda \ketbra{n} ) \right\|_2 = 1 
\]
and for $m \neq n$
\[
	\tr\left( \frac{1}{\eta_n} \hat{\mathcal{C}}( \lambda \ketbra{n} ) \frac{1}{\eta_m} \hat{\mathcal{C}}(\lambda \ketbra{m})\right) = \frac{d(\varepsilon)}{\eta_n \eta_m} \rightarrow 0
\]
as $\lambda \rightarrow \infty$ because, by assumption, $d(\varepsilon)$ is invariant under $M_n \mapsto \lambda M_n$. Assuming $D > d^2$, we would be able to construct an overcomplete orthonormal basis of positive semidefinite matrices. This is impossible and we conclude that necessarily, $d(\varepsilon) = \Omega(\lambda^2)$.

%%%%%
\section{Compression of quantum models with few measurement outcomes}\label{Sect:comp.of.quant.models}

The purpose of this section is the definition and the analysis of a scheme to compress quantum models. The discussion in section~\ref{Sect:incompressibility} shows that the compressibility of quantum models is limited by the number of measurement outcomes. This will be reflected in the conditions for the compression to be applicable.

Hence, given $\data$ generated by arbitrary quantum states $\rho_x$ and POVMs $\bigl( E_{yz} \bigr)_{z=1}^Z$ on $\mathbb{C}^D$, our goal is to find low-dimensional states and POVMs on $\mathbb{C}^d$ which reproduce $\data$ approximately ($d < D$). To treat the states and the POVMs all at once, we define $M_x := \rho_x$ for $x \in [X]$ and $M_{X + (y-1)Z + z} := E_{yz}$ for $y \in [Y]$ and $z \in [Z]$. So the matrices $(M_n)_{n=1}^J$ form a list of matrices describing the states and POVMs. Then, the matrix $\mathcal{G}_{nm} = \tr\bigl( M_n M_m \bigr)$ is the Gram matrix generated by all of the states and all of the measurement operators. Hence, $\mathcal{G}$ describes the Euclidean geometry of the set $\bigl( M_n \bigr)_{n=1}^J$.

The transition from the high-dimensional states and POVMs to their low-dimensional counterparts will be described in terms of a compression map $\comp: S^+(\mathbb{C}^D)^{\times J} \rightarrow S^+(\mathbb{C}^d)^{\times J}$,
\beq\nn
	\comp:  \bigl( M_n \bigr)_{n=1}^J \mapsto \comp\Bigr(\bigl( M_n \bigr)_{n=1}^J\Bigr)  = \bigl( \comp_n(M_n) \bigr)_{n=1}^J.
\eeq
We will see that the proposed compression scheme $\comp$ approximately preserves not only the inner products $\mathcal{M}_{n,m}$ but also the entire Gram matrix $\mathcal{G}$. The compression errors are additive. They scale with the trace norm of the measurement operators. In the remainder we are going to suppress the index $n$ of $\comp_n(\cdot)$.

%%%%%
%\subsection{No-go for linear compressions}

%To explicitly construct compressions of models involving binary measurements only, we might want to try to choose a linear function $\comp$. The following argument shows that this cannot work.

%Mention that linear schemes can be excluded in general because of relation between id and max. mix

%%%%%
\subsection{The compression scheme}
\label{sec:description.of.Q.to.Q.compression.scheme}

Let $\varepsilon > 0$ and set $J = X + YZ$. Assume that $\Pi = G / \sqrt{2d} \in \mathbb{C}^{d \times D}$ with $G_{ij} = A_{ij} + iB_{ij}$ where $A_{ij}, B_{ij} \sim \mathcal{N}(0,1)$ iid. Set
\beq\label{Eq:def.comp.of.states}
	\comp(\rho_x) = \frac{1}{\tr(\Pi \rho_x \Pi^\dag)} \; \Pi \rho_x \Pi^\dag
\eeq
for all $x \in [X]$,
\beq\label{eq:def.comp.of.POVMs.part.1}
	\comp(E_{yz}) = \frac{1}{1+\varepsilon} \; \Pi E_{yz} \Pi^\dag
\eeq
for all $y \in [Y]$ and $z \in [Z-1]$ and set 
\beq\label{Eq:def.comp.of.POVMs.part.2}
%	\comp(E_{yZ}) = \id - \Pi \Bigl( \sum_{z=1}^{Z-1} E_{yz} \Bigl) \Pi^\dag
	\comp(E_{yZ}) = \id - \sum_{z=1}^{Z-1} \comp(E_{yz})
\eeq
for all $y \in [Y]$. Note that the map $\mathcal{C}$ is non-linear. Obviously, $\comp(\rho_x)$ are valid quantum states. In section~\ref{Sect:Verification.that.compression.of.POVM.is.psd.and.normalized}, we are going to show that with non-vanishing probability, the matrices $\bigl( \comp(E_{yz}) \bigr)_{z=1}^Z$ form a valid POVM and therefore, the image of $\comp$ gives valid quantum states and measurements on $\mathbb{C}^d$. 

%However, if we are only considering a single measurement (i.e., $v=1$; $K$ arbitrary) [TO CHECK: can we come up with a ``full-rank and partial-columns-sum-identical"-type argument to extend Wolf's trick to more than one measurement?], then checking that the compressed matrices form a POVM is potentially unnecessary since for only one measurement, any set of psd matrices $(M_i)_i$, $(N_j)_j$ satisfying $\data_{ij} = \tr(M_i N_j)$ can be transformed into a valid quantum model (see Lemma~17 in~\cite{lee2014some}). However, the proof of this result relies on exact normalization of the rows of the matrix $\data' := \bigl(\tr(\comp(\rho_w)\comp(E_k)) \bigr)_{w,k}$. To see why this is an issue for us, we first define $\xi_w := \sum_{k=1}^K \data'_{wk}$. Then, before we can apply Lemma~17 in~\cite{lee2014some}, we need to rescale all the matrices $\comp(\rho_w)$ with $\xi_w^{-1}$, i.e., $\comp(\rho_w) \mapsto \xi_w^{-1} \comp(\rho_w)$. If $\left|   \xi_w - 1  \right| = \Omega(K)$ then this rescaling leads to relative errors in $\data'$ which grow with the number of measurement outcomes. If this is not an issue, then, in case of a single measurement, section~\ref{Sect:Verification.that.compression.of.POVM.is.psd.and.normalized} can be bypassed. 

%%%%%
\subsection{Probability that compression of a POVM is psd and normalized}\label{Sect:Verification.that.compression.of.POVM.is.psd.and.normalized}

It is not {\em a priori} obvious that 
$\bigl( \comp(E_{yz}) \bigr)_{z=1}^Z$ is indeed a valid POVM.  While $\comp(E_{yz})\geq 0$ for $z<Z$ and $\sum_{z=1}^Z \comp(E_{yz}) = I$ hold automatically, it is not always true that 
\be \comp(E_{yZ}) \geq 0
\quad\Leftrightarrow\quad
\Pi \sum_{z=1}^{Z-1} E_{yz} \Pi^\dag \leq (1+\eps)I
 \label{eq:EyZ-psd}\ee
However, for appropriate choice of $\eps$, we will see that \eq{EyZ-psd} holds with high probability.  Let $P$ be a projector onto the support of $\sum_{z=1}^{Z-1} E_{yz}$.  Let $r = \rank P = \rank \sum_{z=1}^{Z-1} E_{yz}$.  Then $\|\Pi P \Pi^\dag\| = \sigma_{\max}(\Pi P)^2$, where $\sigma_{\max}(\cdot)$ denotes the largest singular value.  Because of the unitary invariance of the Gaussian measure, $A := \Pi P$ is distributed identically to a $d\times r$ matrix of complex i.i.d. Gaussians with mean 0 and variance $1/d$. 
Specifically observe that $\E[A A^\dag] = I_r$.  

We now appeal to a standard result in random matrix theory.
\begin{theorem}[\cite{Haagerup03}]\label{thm:bound.on.sigma.max}
	Let $A$ be a $r \times d$ Gaussian matrix with
iid entries satisfying $\E[A_{ij}]=0, \E[|A_{ij}|^2]=1/d$. Then for $0\leq t\leq d/2$ and $0\leq \delta \leq 2$ we have
\ba
\E[\exp(t \|A A^\dag \|)]  &\leq
d \exp \L(t(1+\sqrt{r/d})^2 + \frac{t^2}{d}(1+r/d)\R) 
\label{eq:mgf-bound}\\
\prob{\|AA^\dagger\| &\geq (1+\sqrt{r/d})^2 + \delta}
 \leq d e^{-d\delta^2/8} \label{eq:tail-bound}
\ea
\end{theorem}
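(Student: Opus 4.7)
The plan is to derive the MGF bound~\eqref{eq:mgf-bound} first and then obtain~\eqref{eq:tail-bound} from it via a standard Chernoff argument. The starting point is the variational characterization $\|AA^\dag\| = \sup_{x \in S^{r-1}_\bbC} \|A^\dag x\|^2$, where $S^{r-1}_\bbC$ is the complex unit sphere in $\bbC^r$. For any fixed unit $x$, the vector $A^\dag x$ has i.i.d.\ complex Gaussian entries of variance $1/d$, so $d\|A^\dag x\|^2$ is distributed as $\tfrac12\chi^2_{2d}$. The explicit chi-squared MGF then gives $\E[\exp(t\|A^\dag x\|^2)] = (1-t/d)^{-d} \leq \exp(t + t^2/d)$ for $t\leq d/2$, using $-\ln(1-u)\leq u+u^2$ on $[0,1/2]$.

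To pass from a single vector to the supremum, I would take a $\delta$-net $\cN_\delta \subset S^{r-1}_\bbC$ of cardinality at most $(1+2/\delta)^{2r}$ (standard volume bound in real dimension $2r$) and use the discretization inequality $\|AA^\dag\| \leq (1-\delta)^{-2}\max_{x\in \cN_\delta}\|A^\dag x\|^2$. Union-bounding the MGF over the net yields
\[
\E[\exp(s\|AA^\dag\|)] \leq |\cN_\delta|\,\exp\!\Bigl(s' + (s')^2/d\Bigr),\qquad s':=s/(1-\delta)^2,
\]
valid for $s'\leq d/2$. Tuning $\delta$ of order $\sqrt{r/d}/(1+\sqrt{r/d})$ lets the cardinality factor $|\cN_\delta|$ fold into a shift of the linear-in-$s$ coefficient, producing the claimed MGF bound $d\exp(s(1+\sqrt{r/d})^2 + (s^2/d)(1+r/d))$, with the $d$ prefactor absorbing remaining slack.

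For~\eqref{eq:tail-bound} I would apply Markov with the MGF bound: writing $c=(1+\sqrt{r/d})^2+\delta$,
\[
\prob{\|AA^\dag\|\geq c}\leq \frac{\E[\exp(t\|AA^\dag\|)]}{\exp(tc)}\leq d\exp\!\Bigl(-t\delta+\tfrac{t^2}{d}(1+r/d)\Bigr),
\]
and optimize in $t$; the stationary choice $t\approx d\delta/[2(1+r/d)]$ (admissible when $\delta\leq 2$ so that $t\leq d/2$) produces the desired exponent $-d\delta^2/8$, with the $(1+r/d)$ factor swept into the constant using $\delta \leq 2$.

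The main obstacle is reproducing the Marchenko--Pastur constant $(1+\sqrt{r/d})^2$ on the nose rather than a weaker $(1+\sqrt{r/d}+o(1))^2$. The $\delta$-net route naturally introduces a multiplicative $(1-\delta)^{-2}$ correction, and the net cardinality $\approx(2r/\delta)^{2r}$ pollutes the linear-in-$s$ coefficient; balancing these terms without degrading the leading constant requires careful optimization of $\delta$ (and is the reason the $d$ prefactor is acceptable in~\eqref{eq:mgf-bound}). An alternative, used by Haagerup--Thorbj\o rnsen themselves, is to compute the moments of $AA^\dag$ directly via non-crossing pair partitions and then pass to the MGF through a generating-function identity, which yields the sharp constant more cleanly but requires heavier combinatorial machinery.
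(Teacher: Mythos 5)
Your Chernoff step for \eqref{eq:tail-bound} is essentially the paper's own argument: the paper takes \eqref{eq:mgf-bound} as given (it is Lemma~7.2 of \cite{Haagerup03} -- note the theorem is explicitly attributed to that reference) and deduces the tail bound by Markov with the fixed choice $t=d\delta/4$, using $r\leq d$ so that $1+r/d\leq 2$. Two small corrections to your version: it is the hypothesis $r\leq d$, not $\delta\leq 2$, that lets you sweep the factor $1+r/d$ into the constant; and your stationary choice $t=d\delta/\bigl(2(1+r/d)\bigr)$ is not always admissible, since it exceeds $d/2$ whenever $\delta>1+r/d$ (e.g.\ $r\ll d$, $\delta=3/2$), whereas $t=d\delta/4\leq d/2$ works for all $\delta\leq 2$. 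These are cosmetic and easily repaired.

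The genuine gap is in your net-based derivation of \eqref{eq:mgf-bound}. Your route gives, for every admissible $t$,
\[
\E\bigl[\exp(t\|AA^\dag\|)\bigr]\;\leq\;\exp\Bigl(2r\ln(1+2/\delta)\;+\;\tfrac{t}{(1-\delta)^2}\;+\;\tfrac{t^2}{d(1-\delta)^4}\Bigr),
\]
and for this to imply the claimed bound at a given $t$ you need
\[
2r\ln(1+2/\delta)-\ln d\;\leq\;t\Bigl[(1+\sqrt{r/d})^2-(1-\delta)^{-2}\Bigr]+\tfrac{t^2}{d}\Bigl[(1+r/d)-(1-\delta)^{-4}\Bigr].
\]
The additive net-cardinality term on the left is at least $2r\ln 3$ (any $\delta<1$), independent of $t$, while the budget on the right is tiny: the quadratic bracket is typically \emph{negative} (for $\delta\approx\sqrt{r/d}$ one has $(1-\delta)^{-4}\approx 1+4\sqrt{r/d}>1+r/d$ when $r\ll d$), the linear bracket is at most $(1+\sqrt{r/d})^2-1=O(\sqrt{r/d})$ for any $\delta>0$, and the prefactor only contributes $\ln d$. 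So for moderate $t$ (say $t=O(\sqrt d)$) the right side is $O(\sqrt{r})$ while the left side is $\Theta(r)$, and even at the most favorable point $t=d/2$ the inequality fails for, e.g., $r=d/4$ for every choice of $\delta$. In short, the cardinality factor $(1+2/\delta)^{2r}$ cannot be ``folded into a shift of the linear coefficient'' with only a prefactor $d$ left over; the net argument inevitably degrades the constant $(1+\sqrt{r/d})^2$ or introduces an $e^{\Theta(r)}$ prefactor, and the required MGF bound must hold for \emph{all} $0\leq t\leq d/2$, not just large $t$. Your own closing paragraph concedes this; to prove the statement as written you should do what the paper does and import Lemma~7.2 of \cite{Haagerup03} (or reproduce its moment computation), reserving your own work for the Markov/Chernoff step.
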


\begin{proof}
\eq{mgf-bound} is Lemma 7.2 of \cite{Haagerup03} and \eq{tail-bound} follows by setting $t=d\delta/4$, and using the bound $\prob{\|AA^\dag\| \geq \lambda} \leq \E[\exp(t\|AA^\dag\|)]e^{-t\lambda}$ and the fact that $r\leq d$.
\end{proof}

This implies that our dimension-reduction strategy does not significantly blow up the POVM normalization as long as the rank of the measurement operators is much less than $d$. 
\begin{lemma}\label{lem:control-norm}
	Let $\Pi = G / \sqrt{2d} \in \mathbb{C}^{d \times D}$ with $G_{ij} = A_{ij} + iB_{ij}$ where $A_{ij}, B_{ij} \sim \mathcal{N}(0,1)$ iid. Assume $E \in \mathbb{C}^{D \times D}$ psd with $\| E \| \leq 1$. Then,
	\beq\label{eq:prob.bound.for.op.norm.of.E}
		\prob{ \| \Pi E \Pi^\dag \| \leq 1 + \eps  } \geq 1 - de^{-\frac{\eps^2}{32} d}
	\eeq
	if
	\beq\label{eq:d-beats-r}
		d \geq \frac{32}{\eps^2}  \, \rank(E).
	\eeq
\end{lemma}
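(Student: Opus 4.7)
The plan is to reduce the problem to the operator norm of a Gaussian matrix with independent entries, so that Theorem~\ref{thm:bound.on.sigma.max} can be applied directly.

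First I would dominate $E$ by a projector. Since $E$ is psd with $\|E\|\leq 1$, it satisfies $E \leq P$, where $P$ is the orthogonal projector onto $\supp(E)$, and $\rank(P)=\rank(E)=:r$. Conjugating this operator inequality by $\Pi$ gives $\Pi E\Pi^\dag \leq \Pi P\Pi^\dag$, which in turn gives the key reduction
\beq
\|\Pi E\Pi^\dag\| \leq \|\Pi P\Pi^\dag\| = \sigma_{\max}(\Pi P)^2.
\eeq
So it suffices to bound the largest singular value of $\Pi P$.

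Next I would recognize $A:=\Pi P$ as a Gaussian matrix of the form required by Theorem~\ref{thm:bound.on.sigma.max}. By the unitary invariance of the complex Gaussian measure (used exactly as in the paragraph preceding Theorem~\ref{thm:bound.on.sigma.max}), $A$ is distributed as a $d\times r$ matrix of iid complex Gaussians with mean $0$ and variance $1/d$; in particular $\E[AA^\dag]=I_d$ in the $r$-dimensional effective range. Applying \eqref{eq:tail-bound} with parameter $\delta$ yields
\beq
\prob{\|\Pi E\Pi^\dag\|\geq (1+\sqrt{r/d})^2+\delta}\leq d\,e^{-d\delta^2/8}.
\eeq

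Finally I would choose $\delta$ to match the statement. Using the hypothesis $d\geq (32/\eps^2)\,r$, we get $\sqrt{r/d}\leq \eps/\sqrt{32}$ and hence
\beq
(1+\sqrt{r/d})^2 = 1+2\sqrt{r/d}+r/d \leq 1 + \tfrac{2\eps}{\sqrt{32}} + \tfrac{\eps^2}{32} \leq 1+\tfrac{\eps}{2},
\eeq
for $\eps\in(0,1]$ (a routine numerical check). Setting $\delta=\eps/2$, the event $(1+\sqrt{r/d})^2+\delta\leq 1+\eps$ holds deterministically, and the tail bound becomes $d\,e^{-d\eps^2/32}$, giving exactly~\eqref{eq:prob.bound.for.op.norm.of.E}.

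The only mildly subtle step is the reduction $E\leq P$, which is what lets us trade a general psd $E$ with $\|E\|\leq 1$ for the nicer quantity $\|\Pi P\Pi^\dag\|$ controlled by Haagerup's bound; everything else is arithmetic on the constants in $(1+\sqrt{r/d})^2$, so I do not expect any real obstacle.
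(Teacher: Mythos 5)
Your proposal is correct and follows essentially the same route as the paper: dominate $E$ by a rank-$r$ projector $P$, use unitary invariance to identify $\Pi P$ with a $d\times r$ iid complex Gaussian matrix, apply the tail bound \eqref{eq:tail-bound} of Theorem~\ref{thm:bound.on.sigma.max} with $\delta=\eps/2$, and absorb $(1+\sqrt{r/d})^2$ into $1+\eps/2$ via the hypothesis $d\geq (32/\eps^2)r$. The only differences are cosmetic constants in the final arithmetic (the paper bounds $(1+\sqrt{r/d})^2\leq 1+0.37\eps$ using $\eps\leq 1/2$), so there is nothing to fix.
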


\begin{proof}
Let $r =\rank E$.  Since $E\leq I$, then $E\leq P$ for some rank-$r$ projector $P$.    Then, according to the discussion earlier in this section, $\|\Pi E \Pi^\dag \| \leq \|A A^\dag\|$ for some matrix $A$ satisfying the conditions of \thmref{bound.on.sigma.max}.  If \eq{d-beats-r} holds then with probability $\geq 1- d e^{-d\eps^2/32}$ we have
\be  
\|\Pi E \Pi^\dag\|  \leq \|AA^\dag \|
\leq (1 + \sqrt{r/d})^2 + \eps/2.\ee
From \eq{d-beats-r} and $\eps\leq 1/2$ we have 
\beq\bal
(1 + \sqrt{r/d})^2 &\leq (1 + \eps/\sqrt{32})^2 \\
& \leq 1 + \eps\L(\frac{2}{\sqrt{32}} + \frac{1}{64}\R)
\leq 1 + 0.37\eps.\eal\eeq
Together this implies that $\|\Pi E \Pi^\dag\| \leq 1 + \eps$.
\end{proof}

We now apply this to dimension reduction.  For each $y \in [Y]$ set
\beq\label{eq:event.E.y}
	\mathcal{E}_{y} = \{ \Pi \; | \; \| \Pi \Bigl( \sum_{z \in [Z-1]} E_{yz} \Bigr) \Pi^\dag \| \leq 1+\varepsilon \}.
\eeq
By Lemma~\ref{lem:control-norm} and the union bound
\beq\label{prob.lb.for.event.E.v}
	\prob{ \bigcap_{y\in [Y]} \mathcal{E}_{y} } \geq  1 - Y e^{-\frac{\varepsilon^2}{32} d}
\eeq
if
\beq\label{condition.for.application.lb.for.Ev}
	d > \frac{32}{\varepsilon^2}  \, \rank\Bigl( \sum_{z=1}^{Z-1} E_{yz} \Bigr).
\eeq

%%%%%
\subsection{Tail bounds on compression errors}\label{Sect:tail.bounds.on.Q.to.Q.compression}

We use \eq{matrix-compression-error-second} to separately list the compression errors for inner products $\tr\bigl[ \rho_x \rho_{x'} \bigr]$, $\tr\bigl[ \rho_x E_{yz} \bigr]$ and $\tr\bigl[ E_{yz} E_{y'z'} \bigr]$.  Let $\cJ$ denote the event where $\Pi$ satisfies the $\eps$-JL property for the eigenstates of all the $\rho_x$ and $E_{yz}$.

This event implies the following bounds.
\beq\begin{aligned}\label{Eq:temp.f8e5sej2}
	\tr\bigl[ \Pi \rho_x \Pi^\dag \bigr] &= \sum_{j=1}^D p_j \tr\bigl[ \Pi \ketbra{\psi^x_j} \Pi^\dag \bigr] =  \sum_{j=1}^D p_j \| \Pi \ket{\psi^x_j} \|_2^2\\ &\in [(1-\varepsilon)^2, (1+\varepsilon)^2].
\end{aligned}\eeq

\emph{State-state error.} For $n,m \in [X]$,~\eq{matrix-compression-error-second} implies (using~\eqref{Eq:temp.f8e5sej2})
\beq\begin{split}\label{Eq:final.bounds.on.compressed.state.state.ips}
	& \frac{1}{(1+\varepsilon)^4} \Bigl( \tr(\rho_{x}\rho_{x'})  -  192 \eps \Bigr)\\
	&\leq	\tr\bigl(\comp(\rho_{x}) \comp(\rho_{x'}) \bigr)  \\
	&\leq	  \frac{1}{(1-\varepsilon)^4} \Bigl(\tr(\rho_{x}\rho_{x'})  +  192 \eps \Bigr)
\end{split}\eeq

\emph{State-measurement error.} Let $z \in [Z-1]$. By~\eq{matrix-compression-error-second},
\beq\begin{split}\label{Eq:final.bounds.on.compressed.state.meas.ips.for.k.from.to.1.to.K-1}
	& \frac{1}{(1+\varepsilon)^3} \Bigl( \tr(\rho_{x}E_{yz})  - \tr(E_{yz}) 192 \eps \Bigr)\\
	&\leq	\tr\bigl(\comp(\rho_{x}) \comp(E_{yz}) \bigr)  \\
	&\leq	  \frac{1}{(1-\varepsilon)^2(1+\varepsilon)} \Bigl(\tr(\rho_{x}E_{yz})  + \tr(E_{yz}) 192 \eps \Bigr)
\end{split}\eeq
Let $z=Z$. By~\eqref{Eq:def.comp.of.states}, \eqref{eq:def.comp.of.POVMs.part.1} and~\eqref{Eq:def.comp.of.POVMs.part.2},
\beq\begin{aligned}\label{Eq:final.bounds.on.compressed.state.meas.ips.for.k.equal.to.K}
	& 	\tr\bigl[ \mathcal{C}(\rho_x) \mathcal{C}(E_{yZ}) \bigr]\\
	&=	\tr\Bigl[ \frac{\Pi \rho_x \Pi^\dag}{\tr[ \Pi \rho_x \Pi^\dag ]}  \Bigl( \id - \Pi \Bigl( \sum_{z=1}^{Z-1} E_{yz} \Bigr)\Pi^\dag \Bigr) \Bigr]\\
	&=	1 - \Bigl( \sum_{z=1}^{Z-1} \tr\bigl[ \mathcal{C}(\rho_x) \mathcal{C}(E_{yz}) \bigr] \Bigr)
\end{aligned}\eeq
All the summands in~\eqref{Eq:final.bounds.on.compressed.state.meas.ips.for.k.equal.to.K} can be bounded by~\eqref{Eq:final.bounds.on.compressed.state.meas.ips.for.k.from.to.1.to.K-1}.

\emph{Measurement-measurement error.} Let $z,z' \in [Z-1]$. By~\eq{matrix-compression-error-second},
\beq\begin{split}\label{Eq:final.bounds.on.compressed.meas.meas.ips.for.k.from.to.1.to.K-1}
	& \frac{1}{(1+\varepsilon)^2} \Bigl( \tr(E_{yz}E_{y'z'})  - \tr(E_{yz}) \tr(E_{y'z'}) 192 \eps \Bigr)\\
	&\leq	\tr\bigl(\comp(\tr(E_{vk})) \comp(E_{y'z'}) \bigr)  \\
	&\leq	  \frac{1}{(1+\varepsilon)^2} \Bigl(\tr(E_{vk} E_{y'z'})  + \tr(E_{vk}) \tr(E_{y'z'}) 192 \eps \Bigr)
\end{split}\eeq
For $z=Z$ we get bounds similar to~\eqref{Eq:final.bounds.on.compressed.state.meas.ips.for.k.equal.to.K}.

In each case the denominators can be absorbed by rounding up $192\eps$ to $200\eps$.  This is a straightforward calculation in the case when $\eps\leq 1/200$ and when $\eps>1/200$ the error bound is vacuously true.

Here, we are able to avoid the no-go Theorem
from~\cite{harrow2011limitations} (already for the factorization of
the identity matrix) because $\mathcal{C}$ is not
completely positive, or even linear.  Indeed, generically each row of $\Pi$ will have norm $\sqrt{D/d}$ and so we will have $\|\Pi\| \approx \sqrt{D/d}$ with high probability.  Thus there will almost always exist matrices $E$ such that $\|\Pi E \Pi^\dag \| \gg \|E\|$.  However, our compression scheme can work because $\|\Pi E \Pi^\dag\|$ is small for most $E$ (or more precisely, for all $E$ it is small for most $\Pi$).  This feature of being able to achieve something for a small number of vectors that is impossible to extend to all vectors can be seen already in the original Johnson-Lindenstrauss Lemma.

%%%%%
\subsection{Proof of \thmref{comp.of.q.models}}\label{Sect:q.models:union.bound}

\begin{proof}
Recall the definitions of the events $\mathcal{E}_{y}$ and $\mathcal{J}$ from~\eqref{eq:event.E.y} and Section~\ref{Sect:tail.bounds.on.Q.to.Q.compression}, respectively.  By~\eqref{prob.lb.for.event.E.v} and \corref{JL} the probability that these all hold simultaneously is
\be
\prob{ \mathcal{J} \cap  \bigcap_{y \in [Y]} \mathcal{E}_{y} } \geq 1 - Y e^{-\frac{\eps^2}{32}d}
 - 4J^2D^2 e^{-\frac{\eps^2}{8}d}.
\ee
If $d$ satisfies \eq{d.big.enough} and $J,D\geq 2$, then this is 
$$ \geq 1 - \frac{Y}{JD} - \frac{4J^2D^2}{(4JD)^2} \geq \frac 1 4.$$

Fix a $\Pi$ for which $\cJ$ and $\bigcap_y \cE_y$ hold.  Let $\comp$ be the corresponding compression scheme described in Section~\ref{sec:description.of.Q.to.Q.compression.scheme}.  Then by the arguments in Sections~\ref{Sect:Verification.that.compression.of.POVM.is.psd.and.normalized} and \ref{Sect:tail.bounds.on.Q.to.Q.compression}, $\cC$ satisfies the bounds in \thmref{comp.of.q.models}.

\end{proof}

%%%%%
\subsection{Implications}\label{Sect:Implications}

%%%%%
\subsubsection{Quantum tomography}\label{Sect:implications.for.quantum.tomo}

%There are different ways to come up with effective models for specific quantum experiments. For instance, we can agree on a classical description of the experiment, ``quantize" it and check whether the measurement data agrees with the quantized model. This approach is both axiomatic and heuristic. It is axiomatic in the sense that we apply fixed ``quantization rules". The approach is heuristic in the sense that physical heuristics enter the modeling on the classical side and the alignment of the quantum model with the measured data, i.e., to understand discrepancies (i.e., ``noise") between the theoretical model and the empirical observations. Note that since we can only check ansatz models forming null hypotheses, this approach fails if we do not have a good physical understanding for what is happening in the experiment. 
%
%Even though less rigid, quantum state tomography is a close relative of the axiomatic/heuristic approach in that it also assumes the capability to come up with physical/axiomatic argumentation leading to practical descriptions of the experimental measurements and the Hilbert space dimension.
%
%
%One guiding principle in the assumption-free setting is Occam's razor.

Every empirical observation can be captured in the form of data tables $\data$,   
\beq\label{def.data.D}
	\mathcal{D} =   \left( \begin{array}{ccccccc}  f_{1|11} & \cdots & f_{Z|11} & \cdots &   f_{1|1Y} & \cdots & f_{Z|1Y}  \\  f_{1|21} & \cdots & f_{Z|21} & \cdots &   f_{1|2Y} & \cdots & f_{Z|2Y} \\  \vdots &   & \vdots &   &   \vdots &   & \vdots  \\   f_{1|X1} & \cdots & f_{Z|X1} & \cdots &   f_{1|XY} & \cdots & f_{Z|XY}  \end{array} \right) .
\eeq
Here, $f_{z|xy}$ describes the empirical relative frequency for measuring the measurement outcome $z$ given that the considered system is in a state labelled by $x$ and given that we have performed the measurement $y$. If the observed system is quantum mechanical, then the states are described in terms of density matrices $\bigl( \rho_x \bigr)_{x=1}^X$ and each measurement $y$ is described in terms of measurement operators $\bigl( E_{yz} \bigr)_{z=1}^Z$. Note that the indices $x,y,z$ are not restricted to be scalar but can be multidimensional. For instance $x = (x_1,x_2,x_3) \in \mathbb{R}^3$ if $\rho_x$ is associated with a state preparation at spatial coordinates $(x_1,x_2,x_3)$, or $y = t \in \mathbb{R}$ if a measurement is repeated at different times $t$.

Of course, in practice, $f_{z|xy}$ only approximates $\tr\bigl( \rho_x E_{yz} \bigr)$. One way to model our uncertainty is by assuming that
\beq\label{Eq:assumed.relation.frequency.vs.probability}
	\| \bigl( f_{z|xy} - \tr(\rho_x E_{yz}) \bigr)_{xyz} \|_{\infty} \leq \delta
\eeq
for $\delta > 0$, i.e., $| f_{z|xy} - \tr(\rho_x E_{yz}) | \leq \delta$ for all $(x,y,z)$. Ideally we would be able to characterize the set $Q[\data]$ of all quantum models that are compatible with $\data$ given the uncertainty~\eqref{Eq:assumed.relation.frequency.vs.probability}. The set $Q[\data]$ is a subset of cartesian products of state space and measurement space. One practical characteristic of $Q[\data]$ is the dimension $d_{\min}$ of the lowest dimensional model in $Q[\data]$. For instance, if our guiding principle is Occam's razor (e.g., to counteract overfitting) then, to describe $\data$ theoretically, we should report a model from $Q[\data]$ whose dimension equals $d_{\min}$.

Denote by $\vec{\data}_{yz}$ the column of $\data$ corresponding to the measurement operator $E_{yz}$. The dimension $d_{\min}$ can be trivially upper bounded by $X$ because, as already pointed out in~\cite{harrigan2007representing}, the choices $\rho_x$ on $\mathbb{C}^{X}$ with
\[
	 \rho_x = \proj{x}
\]
and $E_{yz}$ on $\mathbb{C}^{X}$ with
\[
	\bigl( E_{yz} \bigr)_{i,j} = \bigl( \vec{\data}_{yz} \bigr)_i \; \delta_{i,j}
\]
(corresponding to the factorization $\data = \id \data$) is a valid model for $\data$, i.e., 
\[
	\bigl( \rho_1,..., \rho_X,E_{11},...,E_{YZ} \bigr) \in Q[\data].
\]	
To see this, recall that due to normalization within $\data$, $\sum_{z} \vec{\data}_{yz} = (1,...,1)^T$ and consequently, not only are the proposed states normalized and  psd but we also have that $\sum_z E_{yz} = \id$.

Finding non-trivial upper bounds on $d_{\min}$ is genuinely difficult. This is why here we address the following question: given a $D$-dimensional model in $Q[\data]$, can we---while respecting~\eqref{Eq:assumed.relation.frequency.vs.probability}---find a $d$-dimensional model in $Q[\data]$ with $d \ll D$? In other words, is it possible to \emph{compress} the given $D$-dimensional model into a lower-dimensional model? 

The results from sections~\ref{Sect:incompressibility} and~\ref{Sect:q.models:union.bound} show that the answer is yes and no. It is affirmative for models whose measurements have few outcomes and small trace norm (see section~\ref{Sect:q.models:union.bound}) and it is negative for measurements whose POVM elements are too long to be compressed into a compressed measurement space (see section~\ref{Sect:incompressibility}).

We already discussed an example for the incompressibility of quantum models in section~\ref{Sect:incompressibility}. Now we consider a closely related quantum model which admits exponential compression. Assume $\rho_x$ are $D$-dimensional states, $Z = 2$ and assume that for all $y \in [Y]$ we have $\tr(E_{y1}) = 1$. This example is of practical relevance because carefully calibrated measurements are often believed to be clean in the sense that their POVM elements have trace norm $\mathcal{O}(1)$ (e.g., rank-1 projectors). In this setting, the compression $\comp$ allows us to compress these states and measurements into a quantum model of dimension $d = \mathcal{O}(\log(D)/\varepsilon^2)$ while 
\beq\begin{split}\nn
\L| 
\tr\bigl(\comp(\rho_{x}) \comp(E_{y1}) \bigr)  
- \tr(\rho_{x}E_{y1})\R | \leq 200 \varepsilon
\end{split}\eeq
(recall~\eqref{Eq:final.bounds.on.compressed.state.meas.ips.for.k.from.to.1.to.K-1}). It follows that
\beq\label{fewJIJIOefjwe22}
	\Bigl\|   \Bigl( \tr(\rho_{x}E_{yz}) - \tr\bigl(\comp(\rho_{x}) \comp(E_{yz}) \bigr) \Bigr)_{xyz}   \Bigr\|_{\infty} \leq 200 \varepsilon
\eeq
because $\tr(\rho_{x}E_{y2}) = 1 - \tr(\rho_{x}E_{y1})$. Let $\data$ denote the measurement data associated to the states $\rho_x$ and the binary measurements $E_{yz}$. 

By~\eqref{Eq:assumed.relation.frequency.vs.probability}, $\bigl( \tr(\rho_{x}E_{yz}) \bigr)_{xyz}$ satisfies $\| (\tr[ \rho_x E_{yz} ])_{xyz} - \data \|_{\infty} \leq \delta$. However, the precise values $\tr[ \rho_x E_{yz} ]$ are unknown to the experimentalist. Set $\varepsilon = \delta / 200$. Then, for this choice for $\varepsilon$, the compressed states $\comp(\rho_x)$ and measurements $\comp(E_{xy})$ satisfy~\eqref{Eq:assumed.relation.frequency.vs.probability} and thus, they form a valid and exponentially smaller quantum model on $\mathbb{C}^d$ with $d = \mathcal{O}(\log(D)/\varepsilon^2)$.

%%%%%
\subsubsection{Limits of robust dimension witnessing}\label{sec:limits.for.dimension.witnessing}

A function $f: \data \mapsto f(\data) \in \mathbb{R}$ is a so-called quantum dimension witness~\cite{brunner2013dimension} if for some $Q_D \in \mathbb{R}$,
\beq\label{temp.fiejJeij3}
	f\bigl( (\tr[ \rho_x E_{yz} ])_{xyz} \bigr) \leq Q_D
\eeq
for all quantum states $\rho_x$ and measurements $E_{yz}$ whose dimension is $\leq D$. Dimension witnesses were introduced in the context of Bell inequalities~\cite{acin2007device}. In the prepare-and-measure scenario we are considering here, they have been studied extensively in the past years (see for instance~\cite{harrigan2007representing,wehner2008lower,gallego2010device,stark2014self}). Assume $\rho_x'$ and $E_{yz'}$ are such that
\beq\nn
	f\bigl( (\tr[ \rho'_x E'_{yz} ])_{xyz} \bigr) > Q_{k}.
\eeq
By~\eqref{temp.fiejJeij3}, the dimension $D$ of $\rho_x'$ and $E_{yz'}$ satisfies $D > k$. Hence, dimension witnesses yield lower bounds on the Hilbert space dimension. Indeed, the search for dimension witnesses is motivated by the need to certify high-dimensionality of quantum systems in a device-independent manner, i.e., by only looking at the measured data~$\data$. In the following, $f^*\bigl( (\tr[ \rho'_x E'_{yz} ])_{xyz} \bigr)$ denotes the dimension lower bound that is implied by the dimension witness $f$.

Robustness of dimension witnesses with respect to loss has been studied in~\cite{dall2012robustness}. A more general objective is the analysis of the dimension witness' robustness against more general noise $\data_{xyz} \neq \tr[ \rho'_x E'_{yz} ]$. Here, we quantify noise in the measured data $\data$ in terms of $l_{\infty}$-norm on $\mathbb{R}_+^{X \times YZ}$. One approach to define robustness of dimension witnesses is the demand that there exists $L > 0$ such that for all $\delta > 0$
\beq\bal\label{fjeiJIj342uhufh}
	| f^*\bigl( (\tr[ \rho'_x E'_{yz} ])_{xyz} \bigr) - f^*\bigl( \data \bigr) |  \leq L  \| (\tr[ \rho'_x E'_{yz} ])_{xyz} - \data \|_{\infty}.
\eal\eeq 
In other words, the function $f^*$ is Lipschitz-continuous.

Recall the binary example from section~\ref{Sect:implications.for.quantum.tomo}. There, the compressed model has dimension $\mathcal{O}(\log(D)/\varepsilon^2)$ even though the original model is $D$-dimensional. By~\eqref{fewJIJIOefjwe22} and \eqref{fjeiJIj342uhufh},
\beq\label{fejH3482}
	\Bigl| f^*\bigl( (\tr[ \rho_x E_{yz} ])_{xyz} \bigr) - f^*\Bigl( \bigl( \tr[\comp(\rho_{x}) \comp(E_{yz}) ] \bigr)_{xyz} \Bigr) \Bigr|  \leq L \cdot 200 \varepsilon
\eeq
if $f^*$ is Lipschitz-continuous. Since $f^*$ provides a lower bound on the dimension,
\[
	f^*\Bigl( \bigl( \tr[\comp(\rho_{x}) \comp(E_{yz}) ] \bigr)_{xyz} \Bigr) = \mathcal{O}(\log(D)/\varepsilon^2).
\]
From \eqref{fejH3482}, we have
\beq\label{feijiIJIJwdksk2213}
\bal
	f^*\Bigl( (\tr[ \rho_x E_{yz} ])_{xyz} \Bigr) &= \mathcal{O}(\log(D)\varepsilon^{-2} + L\eps)
\\ &	= \cO(\max(\log(D) ,L^{2/3} \log(D)^{1/3})),
\eal\eeq
where in the second step we have chosen $\eps = \min(1/2, L^{-1/3}(\log D)^{1/3})$. 

We conclude the following for experiments whose measurements have few outcomes and small trace norm: For all Lipschitz-continuous lower bounds $f^*$ there will be an exponential gap~\eqref{feijiIJIJwdksk2213} between the dimension of the underlying Hilbert space and the dimension lower bound certified by the lower bound $f^*$ and $f$ respectively.

\subsubsection{Consequences for one-way quantum communication complexity}\label{sec:quant.1.way.complexity}
Let $f: \{0,1\}^{n} \times \{0,1\}^{m} \rightarrow \{0,1\}$ be a Boolean function. Assume Alice gets the input $x \in \{0,1\}^{n}$ and Bob gets $y \in \{0,1\}^{m}$. In one-way quantum communication protocols~\cite{klauck00}, Alice is allowed to send a single quantum state $\rho_x$ to Bob. After receiving $\rho_x$, Bob tries to output $f(x,y)$. For that purpose, he chooses a POVM $(E_y, \id - E_y)$ and measures $\rho_x$. If he measures $E_y$ he sets $a=1$. Otherwise, he sets $a=0$. The one-way quantum bounded error communication complexity of $f$, denoted $Q^1(f)$, is the minimal number of qubits Alice needs to send to Bob so that $a = f(x,y)$ with probability $2/3$. 

To understand the connection to psd factorizations, let $A \in \mathbb{R}^{2^n \times 2^m}$ such that $A_{xy} = f(x,y)$. Hence, $A$ is a $0/1$-matrix describing $f$. Alice and Bob try to find the smallest $d$  such that for every $x,y$ there exists a state $\rho_x$ and a POVM element $E_y$ such that $\tr(\rho_x E_y) \geq 2/3$ if $A_{xy} = 1$ and $\tr(\rho_x E_y) \leq 1/3$ if $A_{xy} = 0$. Recalling~\eqref{def.data.D}, we note that this is equivalent to finding an approximate low-dimensional quantum model for the ($2^n \times 2^{m+1}$)-data table  
\[
	\data(A) := \bigl(  ( \vec{A}_y | \vec{1} - \vec{A}_y )  :  y \in \{ 0,1 \}^{m} \bigr)
\]
where $\vec{A}_y$ denotes the column of $A$ associated the input $y$ on Bob's side. In $\data(A)$, the two column vectors $( \vec{A}_y | \vec{1} - \vec{A}_y )$ correspond to the binary POVM 
\[
	(E_{y1}, E_{y2}) := (E_y, \id - E_y).
\] 

Let $P$ be a  $D$-dimensional protocol to solve the one-way quantum communication task associated to a specific boolean function $f$. Hence, $Q^1(f) \leq \log_2(D)$. Being a protocol, $P$ specifies mappings $x \mapsto \rho_x$ and $y \mapsto E_y$ for all inputs $x,y$.

We would like to apply Theorem~\ref{thm:comp.of.q.models} to reduce
the dimension of the states and measurements used by $P$.  This can result in significant savings whenever Bob's measurements are sufficiently unbalanced; that is, for each $y$, either $\rank E_{y1}$ or $\rank E_{y2}$ is small.  Specifically define
\be r := \max_{y \in \{0,1\}^m} \min_{z \in \{0,1\}} \rank E_{yz}.\ee
Then we will show that any bounded-error one-way protocol can be compressed to reduce the communication cost to $O(\log(nm r \log(D)))$.

\begin{corollary}\label{Cor:about.quant.1.way.complexity}
	Let $x \in \{0,1\}^n$ and $y \in \{0,1\}^m$. Assume that $x
        \mapsto \rho_x$ and $y \mapsto E_y$ is a $D$-dimensional
        protocol to solve the one-way quantum communication task
        associated to a Boolean function $f: \{0,1\}^n \times
        \{0,1\}^m \rightarrow \{0,1\}$ with error $\eps_0$. Suppose that
\be
d\geq \frac{1280000 r^2}{\eps_1^2}
\max\L( \ln(4(2^n + 2^{m+1})D), r\R)
\label{eq:d-for-CC}
\ee
Then there exists a protocol that transmits a $d$-dimensional state from Alice to Bob and achieves a worst-case error of $\leq \eps_0 + \eps_1$.
\end{corollary}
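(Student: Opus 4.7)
My plan is to apply Theorem~\ref{thm:comp.of.q.models} directly, exploiting the freedom to choose which of the two POVM elements $E_y$ and $I-E_y$ plays the distinguished role of $E_{yZ}$ (the outcome treated as $I - \sum_{z=1}^{Z-1}\comp(E_{yz})$). For each $y$, I relabel the pair $(E_y, I-E_y)$ so that $E_{y1}$ is the element achieving $\min_{z\in\{1,2\}} \rank E_{yz}$ and $E_{y2}=E_{yZ}$ is the other one. After this purely cosmetic reordering, $\rank\bigl(\sum_{z=1}^{Z-1} E_{yz}\bigr) = \rank E_{y1} \leq r$ for every $y$, which is precisely the quantity entering the rank condition of \eqref{eq:d.big.enough}.

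Now I instantiate Theorem~\ref{thm:comp.of.q.models} with $X=2^n$, $Y=2^m$, $Z=2$, so that $J = 2^n + 2^{m+1}$, and choose $\varepsilon := \eps_1/(200\,r)$. Since $\|E_{y1}\|\leq 1$ and $\rank E_{y1}\leq r$, we have $\tr(E_{y1}) \leq r$ and $\tr(I - E_{yZ}) = \tr(E_{y1}) \leq r$, so both bullets of the theorem deliver
\[
\bigl| \tr(\rho'_x E'_{yz}) - \tr(\rho_x E_{yz}) \bigr| \leq 200\,\varepsilon\,r = \eps_1
\]
for every $x,y,z$. Substituting this $\varepsilon$ into the two inequalities of \eqref{eq:d.big.enough} gives $d > (32\cdot 200^2\, r^2/\eps_1^2)\ln(4JD)$ and $d > (32\cdot 200^2\, r^2/\eps_1^2)\,r$; since $32\cdot 200^2 = 1{,}280{,}000$, these are exactly the two cases of the hypothesis \eqref{eq:d-for-CC}. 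Hence the theorem applies and produces $d$-dimensional $\rho'_x$ and POVMs $(E'_{y1}, E'_{y2})$.

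To finish, I undo the relabeling and let the compressed protocol transmit $\rho'_x$, with Bob performing the compressed POVM and declaring $a=1$ on the outcome whose pre-relabeling name was $E_y$. Writing $p_{xy} := \tr(\rho_x E_y)$ and $p'_{xy} := \tr(\rho'_x E'_y)$ for the corresponding compressed element, the inequality above gives $|p'_{xy} - p_{xy}| \leq \eps_1$, so if $f(x,y)=1$ then $p'_{xy} \geq 1-\eps_0-\eps_1$, and if $f(x,y)=0$ then $p'_{xy} \leq \eps_0+\eps_1$. The worst-case error of the compressed protocol is therefore $\leq \eps_0 + \eps_1$, and its communication cost is $\log_2 d$ qubits.

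The entire argument is a direct substitution into Theorem~\ref{thm:comp.of.q.models} once the relabeling observation is made; there is no real obstacle beyond matching the constants and verifying that both conditions of \eqref{eq:d.big.enough} collapse into the single hypothesis \eqref{eq:d-for-CC}.
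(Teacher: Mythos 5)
Your proposal is correct and follows essentially the same route as the paper's own proof: relabel (the paper's ``WLOG'') so the lower-rank element is $E_{y1}$, apply Theorem~\ref{thm:comp.of.q.models} with $\varepsilon=\eps_1/(200r)$, $J=2^n+2^{m+1}$, and $\tr(E_{y1})\leq r$, and note that both conditions of \eqref{eq:d.big.enough} are implied by \eqref{eq:d-for-CC} since $32\cdot 200^2=1280000$. You merely make explicit two steps the paper leaves implicit (the handling of the $z=Z$ outcome via the second bullet / normalization, and the final worst-case error bookkeeping), so there is nothing to correct.
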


\begin{proof}
Assume WLOG that for each $y$, $\rank E_{y1} \leq \rank E_{y2}$, so that $r = \max_y \rank(E_{y1})$.   Let $\eps = \eps_1 / 200r$.  Observe that $\tr E_{y1} \leq r$ for each $y$.  We would like to apply \thmref{comp.of.q.models} to this collection of measurements.  Here $J = 2^n + 2^m\cdot 2$. 
Thus we
we obtain $d$-dimensional compressed states $\{\rho_x'\}$ and measurements $\{E_{yz}'\}$ such that
\be
\L| \tr(\rho_x' E_{yz}')  - \tr(\rho_x E_{yz})\R| \leq \eps_1
\ee
Here we also used the fact that $\tr(\rho_x' E_{y2}') = 1 - \tr(\rho_x'E_{y1}')$, so we can extend the error bounds for the $z=1$ measurements to $z=2$ measurements.
\end{proof}

%%%%%
\subsubsection{Quantum message identification}\label{Sect:ID}

The problem of quantum message identification (introduced by Winter in \cite{winter2004quantum}) can be thought of as a quantum generalization of the problem of testing whether two bit strings are equal.  Alice and Bob get descriptions of $D$-dimensional pure states $\tau = \ketbra{\phi}$ and $\pi = \ketbra{\theta}$, respectively.  The goal is for Bob to output a bit that equals 1 with probability close to $\tr(\tau\pi)$, which we think of primarily as guessing whether $\tau$ and $\pi$ are approximately equal or approximately orthogonal.  The problem would be trivial if Alice could send Bob the state $\tau$, but instead she is restricted to transmitting only a $d$-dimensional system.  (Winter considered a more general setting in which the parties can also use entanglement and/or shared or private randomness.)

Alice can use a channel $T: \mathcal{A}_1 \rightarrow \mathcal{A}_2$ to send quantum states to Bob. In quantum message identification, Bob needs to decide (up to some error) whether or not $\tau \approx \pi$. The map $\mathcal{E}: \tau \mapsto \mathcal{E}(\pi) \in \mathcal{A}_1^{\otimes n}$ describes Alice's encoding of her state $\pi$ before she uses the channel $T$ $n$-times to send her encoded state to Bob. Upon receiving $T^{\otimes n}\bigl( \mathcal{E}(\pi) \bigr)$, Bob applies a binary measurement $(\mathcal{D}_{\tau}, \id - \mathcal{D}_{\tau})$ to decide whether or not he should reject that $\tau \approx \pi$. Hence, $\mathcal{D}$ maps $\tau$ to a POVM element, i.e., $0 \leq \mathcal{D}_{\tau} \leq \id$. The maps $(\mathcal{E},\mathcal{D})$ provide an $(n,\lambda)$-quantum-ID code~\cite{winter2004quantum} if 
\beq\label{Winter.def.of.id.code}
	\forall \pi,\tau \ \ \bigl|  \tr(\pi\tau) - \tr\left( T^{\otimes n}\bigl( \mathcal{E}(\pi) \bigr) \mathcal{D}_{\tau} \right)  \bigr| \leq \lambda/2.
\eeq
Let now $T = \mathrm{id}: \mathbb{C}^d \rightarrow \mathbb{C}^d$ be the identity channel and therefore, fix $n=1$. Given $d$ and $\lambda$, how large can $D$ (i.e., the dimension of the message space) be? The following Theorem is the main result of~\cite{winter2004quantum}.

\begin{theorem}[Proposition~17 in~\cite{winter2004quantum}]\label{Winter.thm.about.message.id}
	For $0 < \lambda < 1$, there exists on $\mathbb{C}^d$ a quantum-ID code of error $\lambda$ such that $D = \left\lfloor d^2 \frac{(\lambda/100)^4}{4 \log(100/\lambda)} \right\rfloor$.
\end{theorem}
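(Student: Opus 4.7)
The plan is to adapt the randomized compression scheme of Section~\ref{sec:description.of.Q.to.Q.compression.scheme} to the identification task. Given a Gaussian $\Pi\in\mathbb{C}^{d\times D}$ scaled so that $\E[\Pi^\dag\Pi]=I_D$, I would mirror \eqref{Eq:def.comp.of.states}--\eqref{eq:def.comp.of.POVMs.part.1} and set Alice's encoding to $\mathcal{E}(\ketbra{\phi}) = \Pi\ketbra{\phi}\Pi^\dag/\|\Pi\ket{\phi}\|_2^2$ and Bob's decoding to $\mathcal{D}_{\ketbra{\theta}} = \Pi\ketbra{\theta}\Pi^\dag/(1+\eps)$. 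The latter is a valid POVM element whenever $\|\Pi\ket{\theta}\|_2^2 \leq 1+\eps$, a condition guaranteed by Lemma~\ref{lem:control-norm} applied to the rank-one operator $E=\ketbra{\theta}$, which gives $\|\Pi\ket{\theta}\|_2^2\leq 1+\eps$ for any fixed $\ket{\theta}$ with probability at least $1 - d e^{-d\eps^2/32}$ once $d \geq 32/\eps^2$.

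For a fixed pair, expanding yields $\tr\bigl(\mathcal{E}(\ketbra{\phi})\mathcal{D}_{\ketbra{\theta}}\bigr) = |\bra{\phi}\Pi^\dag\Pi\ket{\theta}|^2/\bigl((1+\eps)\|\Pi\ket{\phi}\|_2^2\bigr)$, and Lemma~\ref{lem:inner-product} combined with Theorem~\ref{thm:complex.JL.Lemma} then gives $\bigl|\tr\bigl(\mathcal{E}(\ketbra{\phi})\mathcal{D}_{\ketbra{\theta}}\bigr) - |\braket{\phi}{\theta}|^2\bigr| = O(\eps)$ with probability $\geq 1 - \mathrm{const}\cdot e^{-d\eps^2/8}$. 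Choosing $\eps = \Theta(\lambda)$ delivers the per-pair error bound $\lambda/2$. To lift this to the uniform promise \eqref{Winter.def.of.id.code}, I would cover $\mathbb{CP}^{D-1}$ by an $\eps'$-net of cardinality $N \leq (c/\eps')^{2(D-1)}$, union-bound the control-norm and inner-product estimates across all $N^2$ pairs in the net, and propagate the approximation to arbitrary pure states using the Lipschitz continuity of $|\braket{\phi}{\theta}|^2$ together with the deterministic operator-norm bound $\|\Pi\| = O(\sqrt{D/d})$ on the residual.

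The hard part will be matching the sharp scaling $D = \lfloor d^2(\lambda/100)^4/(4\log(100/\lambda))\rfloor$: a straightforward union bound over the $\eps'$-net only yields $d \gtrsim D\log(1/\eps')/\eps^2$, which is linear in $D$ rather than the sought $d^2 \gtrsim D$. Closing this gap means exploiting the fact that $\tr(\pi\tau) = \langle\mathrm{vec}(\pi),\mathrm{vec}(\tau)\rangle$ with $\mathrm{vec}(\ketbra{\phi}) = \ket{\phi}\otimes\ket{\phi^*}$, so that the pure states constitute a low-dimensional algebraic subvariety of the Hilbert--Schmidt sphere in $\mathbb{C}^{D\times D}$; dimension counting then makes the effective ambient space the $(d^2-1)$-dimensional density-matrix simplex rather than $\mathbb{C}^d$, which is precisely why the correct threshold is $d^2\gtrsim D$. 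I would therefore either refine the second-moment analysis of $|\bra{\phi}\Pi^\dag\Pi\ket{\theta}|^2$ to account for this subvariety structure, or replace the Gaussian $\Pi$ by a random code drawn from an approximate complex-projective $4$-design on $\mathbb{C}^d$ of size $\sim d^2$, whose pairwise overlaps concentrate tightly about the Haar average by the design property and thereby absorb the covering entropy of $\mathbb{CP}^{D-1}$ into $d^2$ rather than $d$.
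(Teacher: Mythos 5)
First, be aware that the paper does not prove Theorem~\ref{Winter.thm.about.message.id} at all: it is imported verbatim as Proposition~17 of \cite{winter2004quantum}, and the surrounding discussion in Section~\ref{Sect:ID} exists precisely to contrast Winter's \emph{uniform} guarantee (all pure states, quadratic gain $D\sim d^2$) with what the paper's random-projection machinery can deliver (finite families, exponential gain). So the ``proof'' expected here is the citation; what you have attempted is to rederive Winter's result from the paper's compression scheme, and that attempt has a genuine gap which you partly acknowledge but whose proposed repairs would not close it.

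The gap is not a union-bound looseness over an $\eps'$-net; the Gaussian-conjugation scheme itself cannot satisfy \eqref{Winter.def.of.id.code} once $D\gg d$. Two concrete failures. (i) Validity: $\mathcal{D}_{\ketbra{\theta}}=\Pi\ketbra{\theta}\Pi^\dag/(1+\eps)$ must be a POVM element for \emph{every} $\theta$ simultaneously, but $\sup_\theta\|\Pi\ket{\theta}\|_2^2=\|\Pi\|^2\approx(1+\sqrt{D/d})^2$ (as the paper itself notes in Section~\ref{Sect:tail.bounds.on.Q.to.Q.compression}), so the decoder is invalid as soon as $D=\Omega(\eps^2 d)$; Lemma~\ref{lem:control-norm} only helps for a fixed $\theta$. (ii) Correctness: $\|\Pi^\dag\Pi-I_D\|=\Theta(\sqrt{D/d}+D/d)$, and one computes $\E\bra{\phi}(\Pi^\dag\Pi)^2\ket{\phi}=1+D/d$, so for the $\Pi$-dependent choice $\ket{\theta}\propto(I-\ketbra{\phi})\Pi^\dag\Pi\ket{\phi}$ one has $\braket{\phi}{\theta}=0$ while $|\bra{\phi}\Pi^\dag\Pi\ket{\theta}|^2\approx D/d$, i.e.\ orthogonal messages are confused with probability $\Omega(\min(1,D/d))\gg\lambda/2$. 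This is exactly the ``for all $E$ it is small for most $\Pi$, but not for all $E$ at once'' caveat the paper raises (and the no-go of \cite{harrow2011limitations}): per-direction concentration of $\|\Pi v\|_2^2$ is only $e^{-\Theta(d\eps^2)}$, so no net over the $e^{\Theta(D\log(1/\eps'))}$-sized covering of the pure states can be absorbed unless $d=\Omega(D)$ --- never $D\sim d^2$ --- and no second-moment or subvariety refinement changes the operator-norm obstruction in (i)--(ii). Your $4$-design remark points at the right phenomenon (the decoder lives in a $\sim d^2$-real-dimensional operator space), but it is a sketch with no encoding, decoding, or error analysis, and it is no longer the paper's scheme; Winter's actual proof proceeds along different lines. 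What your construction does establish is the finite-family analogue, which is the content of the paper's own Theorems~\ref{thm:comp.of.psd.factorizations} and~\ref{thm:comp.of.q.models} and of the closing remarks of Section~\ref{Sect:ID}, not of the statement in question.
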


Equation~\eqref{Winter.def.of.id.code} can be reinterpreted as the task to find mappings $\mathcal{E}$ and $\mathcal{D}$ to ``compress" quantum states and rank-1 measurements such that the inner products between states and measurements are preserved approximately. This observation provides the link between the study of quantum message identification and the compression of quantum models. On a technical level, the results from~\cite{winter2004quantum} and our results differ in two regards. Firstly, Theorem~\ref{Winter.thm.about.message.id} does not take into account situations where both Alice and Bob get mixed quantum states as inputs. These situations are covered by our considerations. Secondly, the definition in~\eqref{Winter.def.of.id.code} demands approximate preservation of the inner product between all possible pure states whereas we consider finite families of states and measurements. For finite sets we get exponential instead of quadratic compression. For example in case of pure quantum messages, Alice and Bob agree on a family of pure quantum messages $\mathcal{F} = \{ \ketbra{\psi_j} \}_{j} \subseteq \mathbb{C}^{D\times D}$. Then, Alice gets $\pi \in \mathcal{F}$ and Bob gets $\tau \in \mathcal{F}$. Due to the purity of $\pi$ and $\tau$, the rank constraints in~\eqref{condition.for.application.lb.for.Ev} are not a bottleneck. Thus, we can compress into a Hilbert space with dimension $\mathcal{O}\bigl(\log(| \mathcal{F} |)/\varepsilon^2 \bigr)$. The compression error is controlled by~\eqref{Eq:final.bounds.on.compressed.state.state.ips}.

%%%%%
\section{Compression of a POVM with exponentially decaying spectrum}\label{Sect:Verification.that.compression.of.POVM.is.psd.and.normalized:incorporating.spectral.tails}

The previous discussion assumed that with the exception of a single
POVM element per measurement, all POVM elements have small rank. This
constraint might be unnatural when we are asking for the compression
of experimental states and measurements. It is more natural to expect
that experimental measurements are full rank with thin spectral
tails. The purpose of this section is to go beyond the strict low-rank
assumption by assuming instead that the spectrum admits an
exponentially decaying upper bound.    The main result of this section
is a proof of \thmref{comp.of.q.models.with.spectral.tails}.
\begin{reptheorem}{thm:comp.of.q.models.with.spectral.tails}
\ThmDecayingTails{second}
\end{reptheorem}

\begin{proof}
We will use the same compression scheme as in previous
sections.   As with \thmref{comp.of.q.models},
we use \corref{JL} to argue that the $\eps$-JL property (i.e. $\cJ$) holds with probability $\geq 1 - 4J^2D^2 e^{-\frac{\eps^2}{8}d}$.  

We will also recall the same definition of $\cE_y$ from \eq{event.E.y}, and attempt to prove that it holds with high probability for each $y$.  To this end, fix a particular choice of $y$ and set $E := I - E_{yZ}$.  Let the spectral decomposition of $E$ be
\[
	E = \sum_{j=1}^{\rank(E)} \varepsilon_j \ketbra{\varepsilon_j},
\]
with $\eps_1 \geq \eps_2 \geq \ldots$.
By our assumption in \eq{tail.bound.on.m.j-second}),  after the first $j^*$ eigenvalues, the remaining eigenvalues decay exponentially.  Specifically $\eps_{j^*+j} \leq e^{-bj}$.

Previously we used the low rank of the original POVM elements to
control the fluctuations in the spectrum of the compressed POVM
elements.  Now our POVM elements will in general be full rank.
Moreover, we cannot simply divide them into a low-rank piece and a
low-norm piece, since in the worst case $\|\Pi E \Pi\|$ can be much
large than $\|E\|$.  Instead we split the exponentially decaying part of $E$ into components each of rank
\be r := \frac{1}{b}\ln\frac{8}{\eps} .\label{eq:r-choice}\ee
  (We will see later the reason for this choice.)

\ba 
E & = \sum_{i\geq 0} E^{(i)}  \\
E^{(0)} & = \sum_{j \leq j^* + r}\eps_j \proj{\eps_j} \\
E^{(i)} &= \sum_{j \in j^* + ri + [r]} \eps_j \proj{\eps_j} \text{ for }i > 0,\ea
where $j^* + ri+[r]$ denotes the set $\{j^* + ri+1,\ldots,j^* + ri+r\}$.

We will control the norm $\|\Pi E \Pi^\dag\|$ by bounding each block
separately
\be \| \Pi E \Pi^\dag\| 
 \leq \sum_{i\geq 0} \|\Pi E^{(i)} \Pi^\dag \| 
 \label{eq:Pi-E-triangle}\ee
Now combining \eq{d-beats-tails-second} and \eq{r-choice} we have
 \be d \geq \frac{128}{\eps^2}(j^* + r) \label{eq:d-beats-r}.\ee
By \lemref{control-norm},
\be \prob{\| \Pi E^{(0)}\Pi \| \geq 1+ \eps/2} \leq d e^{-d\eps^2/128}.\ee

We will complete the proof by arguing that the remaining terms in \eq{Pi-E-triangle} have norm $\leq \eps/2$ with high probability.  Using the operator inequality $E^{(i)} \leq
e^{-bri} \sum_{j \in j^*+ri + [r]} \proj{\eps_j}$, we have
\beq\bal\label{eq:sum-gaussian-matrices}
\left\| \sum_{i\geq 1} E^{(i)} \right\|
& \leq \sum_{i\geq 1} e^{-bri} 
\L\|\Pi \sum_{j\in j^*+ri+[r]} \proj{\eps_j} \Pi^\dag \R\|.
\\ & = : 
\sum_{i\geq 1} e^{-bri}  \L\|G_i G_i^\dag \R\| =: X
\eal\eeq
In the second line, we have defined complex $d\times r$-dimensional
matrices $G_i$ which are i.i.d.~and 
each comprised of i.i.d.~complex Gaussians with variance $1/d$,
i.e. as in \thmref{bound.on.sigma.max}.
Define 
\be
\bar X = \frac{e^{-br}}{1-e^{-br}}
(1 + \sqrt{r/d})^2 \ee
Now we bound the moment-generating function of $X$ (assuming $t\leq d/2$) by
\bas
\E[e^{tX}] & =
\prod_{i\geq 1} \E[\exp(te^{-bri}\|G_i G_i^\dag\|] \\
& \leq
\prod_{i\geq 1} \E[\exp(t\|G_i G_i^\dag\|]^{e^{-bri}} \\
& \leq
\prod_{i\geq 1} \L[ d \exp(t(1+\sqrt{r/d})^2 +
\frac{t^2}{d}(1+r/d))\R]^{e^{-bri}}\\
& =
\L[ d \exp(t(1+\sqrt{r/d})^2 +
\frac{t^2}{d}(1+r/d))\R]^{\frac{e^{-br}}{1-e^{-br}}}
\\&=
e^{t\bar X}
 d^{\frac{e^{-br}}{1-e^{-br}}} \exp\L(\frac{t^2}{d}(1+r/d)\frac{e^{-br}}{1-e^{-br}}\R)
\eas
We have used here first the independence of each $G_i$, then the bound $\E[x^\alpha]\leq \E[x]^\alpha$ for $x\geq 0, 0\leq\alpha\leq 1$, then  \thmref{bound.on.sigma.max} and the definition of $\bar X$.
From \eq{r-choice} we have $e^{-br} = \eps/8$ and thus $\frac{e^{-br}}{1-e^{-br}} \leq \eps/4$.  Using also \eq{d-beats-r} we have $(1+r/d)\frac{e^{-br}}{1-e^{-br}} \leq \eps/4$ and $\bar X \leq \eps/4$. 
Now we choose $t=d/2$ and use Markov's inequality to bound
\bas 
\prob{X \geq \bar X + \frac{\eps}{4}} & \leq 
\bbE\L[\exp (tX)\R] e^{-t(\bar X+\frac{\eps}{4})}  \\
&\leq d^{\eps/4} \exp \L( t\bar X + \frac{t^2}{d}\frac{\eps}{4} \R)e^{-t(\bar X+\frac{\eps}{4})} 
\\& \leq d^{\eps/4} \exp[-d\eps/16] 
\\ & = \exp[-d\eps/20].
\eas

Since $\bar X \leq \eps/4$ we conclude that
\be 
\L \| \sum_{i\geq 1} E^{(i)} \R \| \leq \eps/2, \label{eq:tail-small}\ee
with probability $\geq 1-\exp[-d\eps/20] \geq 1-d \exp[-d\eps^2/128]$.  By the union bound, both \eq{tail-small} and $\|\Pi E^{(0)} \Pi^\dag\|\leq 1+\eps/2$ hold with probability
\be \geq 1-2d \exp[-d\eps^2/128] 
\geq 1 - \frac{2d}{4JD}.\ee
This lower bounds the probability of a single $\cE_y$ holding.

By the union bound, we have
\be \prob{\cJ \cup \bigcup_{y\in Y} \cE_y} \geq 
1 - 4J^2D^2 e^{-d\frac{\eps^2}{8}} - \frac{2dY}{4JD} > 0.\ee
The rest of the proof is the same as in
\thmref{comp.of.q.models}.
\end{proof}

\section{Conclusion}

Nonnegative matrices $\data \in \mathbb{R}^{X \times YZ}$ can be described in terms of positive semidefinite factorizations or, in some cases, in terms of quantum models. The number of degrees of freedom in both of these models is a function of their dimension $D$. In practice it is often not necessary to describe $\data$ in terms of an exact psd or quantum factorization but it is sufficient to describe some matrix $\data'$ with $\| \data - \data' \|_{\infty} \leq \delta$. It is then natural to ask for the minimal dimension of such approximate factorizations of $\data$. 

Assume $A_i$ and $B_j$ constitute a $D$-dimensional psd factorization of $\data$. In this paper we showed that in randomized polynomial time we can find $d$-dimensional psd matrices $A_i'$ and $B_j'$ such that 
\beq\bal\label{fwjJIJj324khkfhe}
	\bigl| \data_{i,j} - \tr(A_i'B_j') \bigr| 
	&\leq		192 \varepsilon \;  \tr(A_i) \tr(B_j)\\
	d
	&=		\Bigl\lceil \frac{32}{\varepsilon^2} \ln\bigl( 2JD \bigr) \Bigr\rceil
\eal\eeq
where $J = X + YZ$ (cf. Theorem~\ref{thm:comp.of.psd.factorizations}). Hence, psd factorizations with bounded trace norm admit exponential compression. 

Quantum models are psd factorizations with additional normalization constraints. These normalization constraints can affect compressibility dramatically. Consider for instance $\data = I \in \mathbb{R}^{D \times D}$ with the property $\data_{ij} = \tr\bigl( \ketbra{i} \ketbra{j} \bigr)$. This factorization is both psd and quantum ($Y=1$; $E_{1z} = \ketbra{z}$). As a psd factorization, it admits exponential compression. However, regarded as a quantum factorization, every $\delta$-approximate, $d$-dimensional quantum factorization of $\data$ satisfies 
\[
	d \geq D - Z \delta = (1-\delta) D.
\]
This is an application of the lower bound derived in section~\ref{Sect:incompressibility} (cf. Theorem~\ref{Thm:incompressibility}). Aiming for the compression of quantum models, this observation might seem discouraging. 

However, this lower bound from Theorem~\ref{Thm:incompressibility} only prevents compression below $Z$. It thus remains to investigate compressibility of quantum models with $Z = \mathcal{O}(1)$ in $D$. Introducing a compression scheme which runs in randomized polynomial time in $X,Y$ and $Z$, we showed (cf. Theorem~\ref{thm:comp.of.q.models}) that there exist $d$-dimensional states $\rho'_x$ and POVMs $E'_y$ with approximation promises analogous to~\eqref{fwjJIJj324khkfhe} and 
\beq\label{feowMKMkmwe432}
	d = \max\bigl\{ \frac{16}{\varepsilon^2}  \ln(4JD), \frac{32}{\varepsilon^2} \, \rank\Bigl( \sum_{z=1}^{Z-1} E_{yz} \Bigr) \bigr\}.
\eeq
Note that this compression scheme thus almost achieves the lower bound if $\rank\bigl( \sum_{z=1}^{Z-1} E_{yz} \bigr) \sim Z$. 

Equation~\eqref{feowMKMkmwe432} is only meaningful if the relevant
measurements are low rank. Experimental measurements are however
expected to only be approximately low rank. To cover these scenarios
we derived a version of Theorem~\ref{thm:comp.of.q.models} which can
handle measurements that are only approximately low rank in the sense that their spectrum decays exponentially (cf. Theorem~\ref{thm:comp.of.q.models.with.spectral.tails}).  

In this paper we only started the exploration of applications of these results. We briefly commented on their implications in the fields quantum tomography, dimension witnessing, one-way quantum communication complexity and quantum message identification. In particular our results imply that 
\begin{itemize}
\item high-dimensional models whose measurements are low rank and
\item low-dimensional models
\end{itemize}
are equivalent from an operational perspective if the number of outcomes per measurement is small.

On a less technical and more philosophical side, our results open up a
path towards demystification of the apparent miracle that our
inherently complex, high-dimensional world sometimes admits simple,
low-dimensional descriptions which form the basis of science. Since
our compression schemes are essentially just random projections, these
low-dimensional descriptions can potentially be regarded as random
projections. If that is the case then it should not be surprising that
simple, effective, low-dimensional models can be found if the
measurements data stems from few-outcome and sharp/clean measurements,
i.e., measurements which are approximately low rank.

%%%%%
\section*{Acknowledgment}

We would like to thank Guillaume Aubrun, Matt Coudron, David Gross, Kristan Temme and Henry Yuen for interesting discussions and for being such amazing colleagues. CS acknowledges support through the SNSF postdoctoral fellowship, support from the SNSF through the National Centre of Competence in Research ``Quantum Science and Technology" and funding by the ARO grant Contract Number W911NF-12-0486.  AWH was funded by NSF grant CCF-1111382 and ARO contract W911NF-12-1-0486.

%%%%%%%%%%%%%%%%%%%%%%%%%%%%%%%%%%%%%%%%%%%%%%%%%%%%%%%%%%%%%%%%%%%%
% Bibliography
%%%%%%%%%%%%%%%%%%%%%%%%%%%%%%%%%%%%%%%%%%%%%%%%%%%%%%%%%%%%%%%%%%%%

\end{document}